\documentclass{article}

\usepackage[preprint]{neurips_2021}
\usepackage[utf8]{inputenc}
\usepackage[T1]{fontenc}    
\usepackage{booktabs}  
\usepackage{nicefrac}
\usepackage{microtype} 
\usepackage{xcolor}  

\usepackage{hyperref}
\usepackage{url}

\usepackage{subcaption}
\usepackage{amsmath}
\usepackage{cleveref}
\usepackage{amssymb}
\usepackage{amsthm}
\usepackage{bbm}
\usepackage{amsfonts}
\usepackage{dsfont}
\usepackage{color}
\usepackage{tcolorbox}
\usepackage{arydshln}


\usepackage[ruled,vlined]{algorithm2e}
%

\usepackage{pifont}
\newcommand{\RN}[1]{%
	\textup{\uppercase\expandafter{\romannumeral#1}}%
}
\usepackage{graphicx}
\graphicspath{./home/marzsaee/Bureau/saeed/Thesis/Latex/}
\usepackage{subcaption}
\usepackage{listings}
\usepackage{color} 
\definecolor{mygreen}{RGB}{28,172,0} 
\definecolor{mylilas}{RGB}{170,55,241}
\usepackage{multirow}
\usepackage{tikz}

\lstset{language=Matlab,%
	breaklines=true,%
	morekeywords={matlab2tikz},
	keywordstyle=\color{blue},%
	morekeywords=[2]{1}, keywordstyle=[2]{\color{black}},
	identifierstyle=\color{black},%
	stringstyle=\color{mylilas},
	commentstyle=\color{mygreen},%
	showstringspaces=false,
	numbers=left,%
	numberstyle={\tiny \color{black}},
	numbersep=9pt, 
	emph=[1]{newvar},emphstyle=[1]\color{blue}, 
	emph=[2]{uncertain},emphstyle=[2]\color{red},    
}

\setlength{\tabcolsep}{0.45em}

\usepackage{xr}
\makeatletter
\newcommand*{\addFileDependency}[1]{
  \typeout{(#1)}
  \@addtofilelist{#1}
  \IfFileExists{#1}{}{\typeout{No file #1.}}
}
\makeatother

\newcommand*{\myexternaldocument}[1]{
    \externaldocument{#1}
    \addFileDependency{#1.tex}
    \addFileDependency{#1.aux}
}
\myexternaldocument{Supplementary}

\newtheorem{proposition}{Proposition}[section]
\newtheorem{definition}{Definition}[section]

\newtheorem{remark}{Remark}[section]
\newtheorem{lemma}{Lemma}[section]

\newcommand{\mynewwidth}{.49}
\newcommand{\mynewscale}{.3}
\newcommand{\mybigwidth}{.6}
\newcommand{\mybigscale}{.33}

\newcommand{\algtitle}{Procedure}
\newcommand{\corrlayer}{\textit{Corr} layer }

\newif \iffinal
\finaltrue
\iffinal
\newcommand{\EDmodified}[1]{{#1}}
\newcommand{\EDcomments}[1]{{}}
\newcommand{\oldText}[1]{}
\newcommand{\SMmodified}[1]{{#1}}
\newcommand{\SMcomments}[1]{{}}
\else
\newcommand{\EDmodified}[1]{{\color{red} #1}}
\newcommand{\EDcomments}[1]{{\EDmodified{Erick commented: #1}}}
\newcommand{\oldText}[1]{}
\newcommand{\SMmodified}[1]{{\color{green} #1}}
\newcommand{\SMcomments}[1]{{\SMmodified{Saeed commented: #1}}}
\fi

\newcommand{\oldSectionThree}[1]{{}}

\newcommand{\quoteIt}[1]{``#1''}
\newcommand{\V}[1]{{\boldsymbol{#1}}}
\newcommand{\PW}{\V{w}}
\newcommand{\pw}{w}
\newcommand{\Areturns}{\V{\xi}}
\newcommand{\preturn}{\zeta}
\newcommand{\1}{\textbf{1}}
\newcommand{\removed}[1]{}
\newcommand{\tc}[2][black,fill=black]{\tikz[baseline=-0.5ex]\draw[#1,radius=#2] (0,0) circle ;}%
\newcommand{\circprod}{{\tc{1pt}}}
\newcommand{\circcomp}{{\circ}}
\newcommand{\myRe}{{\mathbb{R}}}

\title{\oldText{WaveCorr: A Dilated Convolution Based Model for Portfolio Management Using Reinforcement Learning}

WaveCorr: Correlation-savvy Deep Reinforcement Learning for Portfolio Management}
\author{
	Saeed Marzban \qquad \qquad Erick Delage\\
GERAD \& Department of Decision Sciences, HEC Montr\'{e}al, Montreal, Canada \\
	\texttt{saeed.marzban@hec.ca,erick.delage@hec.ca} 
\AND
Jonathan Yumeng Li\\
Telfer School of Management, University of Ottawa, Ottawa, Canada\\
\texttt{jonathan.li@telfer.uottawa.ca}
\AND
Jeremie Desgagne-Bouchard \qquad\qquad Carl Dussault\\
Evovest, Montreal, Canada
}
\date{April 2020}

\begin{document}

\maketitle

\begin{abstract}


The problem of portfolio management represents an important and challenging class of dynamic decision making problems, where rebalancing decisions need to be made over time with the consideration of many factors such as investors’ preferences, trading environments, and market conditions. In this paper, we present a new portfolio policy network architecture for deep reinforcement learning (DRL) that can exploit more effectively cross-asset dependency information and achieve better performance than  state-of-the-art architectures. In particular, we introduce a new property, referred to as \textit{asset permutation invariance}, for portfolio policy networks that exploit multi-asset time series data, and design the first portfolio policy network, named WaveCorr, that preserves this invariance property when treating asset correlation information. At the core of our design is an innovative permutation invariant correlation processing layer. An extensive set of experiments are conducted using data from both Canadian (TSX) and American stock markets (S\&P 500), and WaveCorr consistently outperforms other architectures with an impressive 3\%-25\% absolute improvement in terms of average annual return, and up to more than 200\% relative improvement in average Sharpe ratio. 
We also measured an improvement of a factor of up to 5 in the stability of performance under random choices of initial asset ordering and weights. The stability of the network has been found as particularly valuable by our industrial partner.

\end{abstract}


\section{Introduction}
In recent years, there has been a growing interest in applying Deep Reinforcement Learning (DRL) to solve dynamic decision problems that are complex in nature. One representative class of problems is portfolio management, whose formulation typically requires a large amount of continuous state/action variables and a sophisticated form of risk function for capturing the intrinsic complexity of financial markets, trading environments, and investors' preferences.

In this paper, we propose a new architecture of DRL for solving portfolio management problems that optimize a Sharpe ratio criterion. While there are several works in the literature that apply DRL for portfolio management problems such as \cite{moody1998performance, he2016deep,liang2018adversarial} among others, little has been done to investigate how to improve the design of a Neural Network (NN) in DRL so that it can capture more effectively the nature of dependency exhibited in financial data. In particular, it is known that extracting and exploiting cross-asset dependencies over time is crucial to the performance of portfolio management. The neural network architectures adopted in most existing works, such as Long-Short-Term-Memory (LSTM) or Convolutional Neutral Network (CNN), however, only process input data on an asset-by-asset basis and thus lack a mechanism to capture cross-asset dependency information. The architecture presented in this paper, named as WaveCorr, offers a mechanism to extract the information of both time-series dependency and cross-asset dependency. It is built upon the WaveNet structure  \citep{oord2016wavenet}, which uses dilated causal convolutions at its core, and a new design of correlation block that can process and extract cross-asset information.

In particular, throughout our development, we identify and define a property that can be used to guide the design of a network architecture that takes multi-asset data as input. This property, referred to as \textit{asset permutation invariance}, is motivated by the observation that the dependency across assets has a very different nature from the dependency across time. Namely, while the dependency across time is sensitive to the sequential relationship of data, the dependency across assets is not. To put it another way, given a multivariate time series data, the data would not be considered the same if the time indices are permuted, but the data should remain the same if the asset indices are permuted. While this property may appear more than reasonable, as discussed in section 3, a naive extension of CNN that accounts for both time and asset dependencies can easily fail to satisfy this property. To the best of our knowledge, the only other works that have also considered extracting cross-asset
dependency information in DRL for portfolio management are the recent works of \cite{zhang2020cost} and \cite{xu2020relation}.  While Zhang et al.’s work is closer to ours in that it is also built upon the idea of adding a correlation layer to a CNN-like module, its overall architecture is different from ours and, most noticeably, their design does not follow the property of asset permutation invariance and thus its performance can vary significantly when the ordering of assets changes. As further shown in the numerical section, our architecture, which has a simpler yet permutation invariant structure, outperforms in many aspects Zhang et al.’s architecture. The work of \cite{xu2020relation} takes a very different direction from ours, which follows a so-called attention mechanism and an encoder-decoder structure. A more detailed discussion is beyond the scope of this paper.

Overall, the contribution of this paper is three fold. First, we introduce a new property, referred to as asset permutation invariance, for portfolio policy networks that exploit multi-asset time series data. Second, we design the first portfolio policy network, named WaveCorr, that accounts for asset dependencies in a way that preserves this invariance. This achievement relies on the design of an innovative permutation invariant correlation processing layer. Third, and most importantly, we present evidence that WaveCorr significantly outperforms state-of-the-art policy network architectures using data from both Canadian (TSX) and American (S\&P 500) stock markets. Specifically, our new architecture leads to an impressive 5\%-25\% absolute improvement in terms of average annual return, up to more than 200\% relative improvement in average Sharpe ratio, and reduces, during the period of 2019-2020 (i.e. the Covid-19 pandemic), by 16\% the maximum daily portfolio loss  compared to the best competing method. Using the same set of hyper-parameters, we also measured an improvement of up to a factor of 5 in the stability of performance under random choices of initial asset ordering and weights, and observe that WaveCorr consistently outperforms our benchmarks under a number of variations of the model: including the number of available assets, the size of transaction costs, etc. Overall, we interpret this empirical evidence as a strong support regarding the potential impact of the WaveCorr architecture on automated portfolio management practices, and, more generally, regarding the claim that asset permutation invariance is an important NN property for this class of problems.

The rest of the paper unfolds as follows. Section \ref{sec:PM_problem} presents the portfolio management problem and risk averse reinforcement learning formulation. Section \ref{sec:architecture}  introduces the new property of \quoteIt{asset permutation invariance} for portfolio policy network and presents a new network architecture based on convolution networks that satisfies this property. Finally, Section \ref{sec:Experimental} presents the findings from our numerical experiments. We finally conclude in Section \ref{sec:Conclusion}.

\removed{\begin{remark}
Note that outside the scope of reinforcement learning, it is worth mentioning a vast literature on the use of deep neural network for extracting dependency information from time series data. To mention a few, there are \cite{hoseinzade2019cnnpred,gunduz2017intraday} on the use of CNN models for time series prediction by taking the correlation into account, \cite{yan2018financial,bai2018empirical} on the use of LSTM and CNN for modeling non-stationary univariate time series, \citep{sharifi2004random,plerou1999universal,tola2008cluster,plerou2000random} on clustering the correlation among assets by using NN techniques.
\end{remark}}

\section{Problem statement}\label{sec:PM_problem}

\subsection{Portfolio management problem}

The portfolio management problem consists of optimizing the reallocation of wealth among many available financial assets including stocks, commodities, equities, currencies, etc. at discrete points in time. In this paper, we assume that there are $m$ risky assets in the market, hence the portfolio is controlled based on a set of weights $\PW_t \in \mathbb{W}:=\{\PW\in\mathbb{R}_+^{m}|\sum_{i=1}^m \pw^i=1\}$, which describes the proportion of wealth invested in each asset.
Portfolios are rebalanced at the beginning of each period $t=0,1,...,T-1$, which will incur proportional transaction costs for the investor, i.e. commission rates are of $c_s$ and $c_p$, respectively. We follow \cite{jiang2017deep} to model the evolution of the portfolio value and weights (see Figure \ref{fig:portfolio_evolution}). 
Specifically, during period $t$ the portfolio value and weights start at $p_{t-1}$ and $\PW_{t-1}$, and the changes in stock prices, captured by a random vector of asset returns $\Areturns_{t}\in \mathbb{R}^{m}$, affect the end of period portfolio value  $p^\prime_{t}:=p_{t-1} \Areturns_{t}^\top \PW_{t-1}$, and weight vector $\PW^\prime_{t}:=(p_{t-1}/p^\prime_t) \Areturns_{t}\circprod \PW_{t-1}$, where $\circprod$ is a term-wise product. The investor then decides on a new distribution of his wealth $\PW_{t}$, which triggers the following transaction cost:
\[c_s \sum_{i=1}^{m} (p_t' \pw'^{i}_{t} - p_t \pw^i_{t})^+ + c_p \sum_{i=1}^{m} (p_t \pw^i_{t} - p_t' \pw'^{i}_{t})^+\,.\]
Denoting the net effect of transaction costs on portfolio value with $\nu_t:=p_t/p_t'$, as reported in \cite{Li:TransactionCostOptim} one finds that $\nu_t$ is the solution of the following equations:
\[\nu_t = f(\nu_t,\PW_t',\PW_t):=1- c_s \sum_{i=1}^{m} (\pw'^{i}_{t} - \nu_t \pw^i_{t})^+ - c_p \sum_{i=1}^{m} (\nu_t \pw^i_{t} - \pw'^{i}_{t})^+.\]
This, in turn, allows us to express the portfolio's log return during the $t+1$-th period as:
\begin{equation}\label{eq:portfolioReturn}
    \preturn_{t+1} := \ln(p'_{t+1}/p'_{t}) = \ln(\nu_{t}p'_{t+1}/p_{t}) = \ln(\nu_t(\PW_{t}',\PW_{t}))+\ln(\Areturns_{t+1}^\top\PW_{t})
\end{equation}
where we make explicit the influence of $\PW_{t}'$ and $\PW_{t}$ on $\nu_t$.


\begin{figure}[h]
	\centering
	\includegraphics[scale=.33]{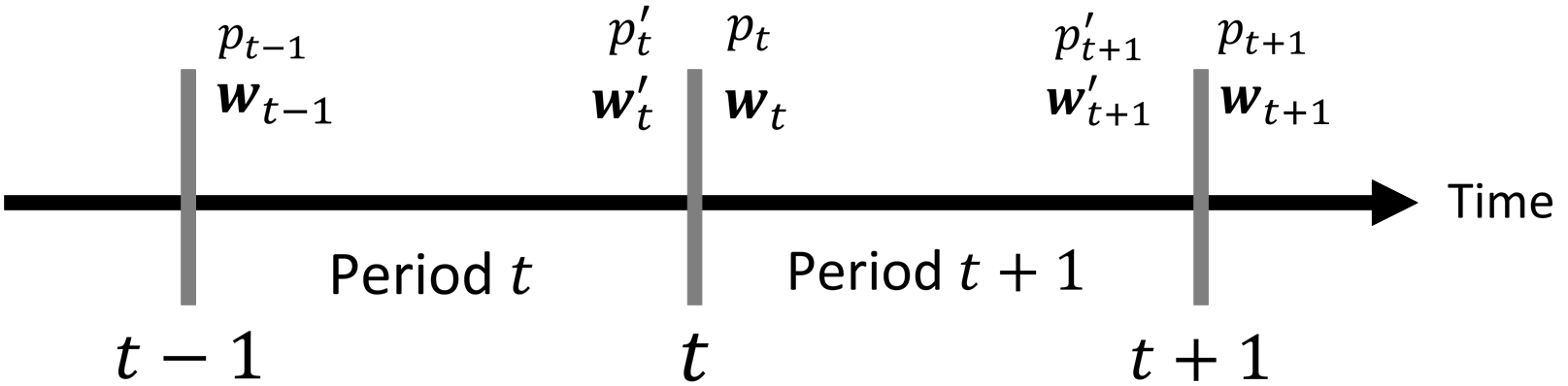}
	\caption{Portfolio evolution through time}
	\label{fig:portfolio_evolution}	
\end{figure}

We note that in \cite{jiang2017deep}, the authors suggest to approximate $\nu_t$ using an iterative procedure. However, we actually show in Appendix \ref{app:bissect} that $\nu_t$ can easily be identified with high precision using the bisection method.

\oldText{
\EDcomments{The old text is below. }
This will expose the investor to transaction costs that are inversely proportional to $\nu_t = \frac{p_t}{p'_{t}} \in (0,1]$. The higher the value of $\nu_t$, the lower the transaction costs paid by the investor. $\nu_t$ is a function of $\PW_t, \PW'_{t-1}$ and the purchase and sell trade fees. Given that $\nu_t$ is known to satisfy:
\begin{equation}
    \nu_t = f(\nu_t):=\frac{1}{1-c_p \pw^0_{t}} \left[ 1- c_p \pw'^{0}_{t} - (c_s + c_p - c_s c_p) \sum_{i=1}^{m} (\pw'^{i}_{t} - \nu_t \pw^i_{t})^+ \right]\,,
\end{equation}

\[c_s \sum_{i=1}^{m} (\pw'^{i}_{t} - \nu_t \pw^i_{t})^+ + c_p' \sum_{i=1}^{m} (\nu_t \pw^i_{t} - \pw'^{i}_{t})^++\nu_t = 1\]

\[c_s \sum_{i=1}^{m} (\pw'^{i}_{t} - p_t \pw^i_{t})^+ + c_p' \sum_{i=1}^{m} (p_t \pw^i_{t} - \pw'^{i}_{t})^++p_t = p_t'\]

\[c_p'=c_p/(1-c_p)\]

 \cite{jiang2017deep} suggests to approximate $\nu_t$ using a fixed number of iterations $\nu_t\approx \nu_t^k$, where $\nu_t^k =f(\nu_t^{k-1})$ and $\nu_t^0$ is fixed. Convergence for large enough $k$ is guaranteed by virtues of the monotone convergence theorem. They suggest calibrating the number of needed iterations using a validation set. In Appendix \ref{thm:bissectThm}, we actually show that $\nu_t$ can be identified with extremely high precision using the bisection method. This in turn allows us to streamline the overall training procedure by removing one hyper-parameter to tune.

\begin{figure}[h]
	\centering
	\includegraphics[scale=.9]{figures/portfolioEvolution.png}
	\caption{Portfolio evolution through time}
	\label{fig:portfolio_evolution}	
\end{figure}

Having $\nu_t$, we can compute $\bar{r}^p_{t+1}$ that is the portfolio log return during period $[t,t+1)$ as follows:
\begin{equation}\label{eq:portfolioReturn}
    \bar{r}^p_{t+1} :=  \ln{\frac{p'_{t}}{p'_{t-1}}} = \ln{\frac{\nu_{t}p'_{t}}{p_{t}}} = \ln{[\nu_t(\PW_{t}) \Areturns_{t+1} \cdot \PW_{t}]}
\end{equation}
where we made explicit the influence of $\PW_{t}$ on $\nu_t$, and $\Areturns_{t} \in \mathbb{R}^{m}$ is the vector of asset returns during time period $[t,t+1)$.
}

\subsection{Risk-averse Reinforcement Learning Formulation}\label{sec:RL}

In this section, we formulate the portfolio management problem as a Markov Decision Process (MDP) denoted by $(\mathcal{S},\mathcal{A},r,P)$. In this regard, the agent (i.e. an investor) interacts with a stochastic environment by taking an action $a_t \equiv \PW_t\in \mathbb{W}$ after observing the state $s_t \in \mathcal{S}$ composed of a window of historical market observations, which include the latest stock returns $\Areturns_{t}$, along with the final portfolio composition of the previous period $\PW_t'$. This action results in the immediate stochastic reward that takes the shape of an approximation of the realized log return, i.e. $r_{t}(s_t,a_t,s_{t+1}) := \ln(f(1,\PW_t',\PW_t))+\ln(\Areturns_{t+1}^\top \PW_t)\approx\ln(\nu(\PW_t',\PW_t))+\ln(\Areturns_{t+1}^\top \PW_t) $, for which a derivative is easily obtained.  
Finally, $P$ captures the assumed Markovian transition dynamics of the stock market and its effect on portfolio weights: $P(s_{t+1}|s_0,a_0,s_1,a_1,...,s_t,a_t)=P(s_{t+1}|s_t,a_t)$. 



Following the works of \cite{moody1998performance} and \cite{almahdi2017adaptive} on risk averse DRL, our objective is to identify a deterministic trading policy $\mu_\theta$ (parameterized by $\theta$) that maximizes the expected value of the Sharpe ratio measured on $T$-periods log return trajectories generated by $\mu_\theta$. Namely:
\begin{eqnarray} \label{mainprob}
\max_
\theta\; J_F(\mu_\theta):=\mathbb{E}_{\substack{s_0\sim F\\s_{t+1}\sim P(\cdot|s_t,\mu_\theta(s_t))}}[SR(r_0(s_0,\mu_\theta(s_0),s_1),...,r_{T-1}(s_{T-1},\mu_\theta(s_{T-1}),s_T))]
\end{eqnarray}
where $F$ is some fixed distribution and
\begin{equation*}
\begin{aligned}
SR(r_{0:T-1}) &:= \frac{(1/T)\sum_{t=0}^{T-1} r_t}{\sqrt{(1/(T-1))\sum_{t=0}^{T-1} (r_t-(1/T)\sum_{t=0}^{T-1} r_t)^2}}\,.
\end{aligned}
\end{equation*}
The choice of using the Sharpe ratio of log returns is motivated by modern portfolio theory (see \cite{markowitz1952portfolio}), 
which advocates a balance between expected returns and exposure to risks, and where it plays the role of a canonical way of exercising this trade-off \citep{sharpeRatio}. While it is inapt of characterizing downside risk, it is still considered a ``gold standard of performance evaluation" by the financial community \citep{bailey2012sharpe}. In \cite{moody1998performance}, the trajectory-wise Sharpe ratio is used as an estimator of the instantaneous one in order to facilitate its use in RL. A side-benefit of this estimator is to offer some control on the variations in the evolution of the portfolio value which can be reassuring for the investor.

\oldText{
\SMmodified{This can provide the model with a suitable proxy that can accordingly optimize the Sharpe ratio over the test data. Apart from that, using Sharpe ratio in this study is beneficial in that there will be one less hyper-parameter to worry about when we compare the performance of our proposed model with other RL-based models. In particular, using performance measures in which the risk aversion can be adjusted using some parameters could create circumstances under which some of the models outperform the other ones only for specific values of risk aversion. This makes the comparison of these models a difficult task. Using Sharpe ratio that is widely accepted among practitioners will resolve this issue.}

\SMmodified{Sharpe ratio is one of the most widely used performance evaluation measures by practitioners that is in line with the modern portfolio theory where risk-adjusted returns are extensively studied. Contrary to some deficiencies that are assigned to this measure such as inability in capturing the downside risk, its simplicity has convinced the financial community to widely use it so that it is called the ``gold standard of performance evaluation" \citep{bailey2012sharpe}. it measures the average portfolio return in a specified time window in excess of the risk free return per unit of volatility, where by subtracting the risk free return, the measure is able of well-demonstrating the profits that are corresponding to risky activities as apposed to risk free ones. Sharpe ratio is assessing if the extra return obtained from an investment policy is due to smart financial decisions or it comes at the cost of having higher risk (volatility).} }


In the context of our portfolio management problem, since $s_t$ is composed of an exogeneous component $s_t^{exo}$ which includes $\Areturns_{t}$ and an endogenous state $\PW_t'$ that becomes deterministic when $a_t$ and $s_{t+1}^{exo}$ are known, we have that:
\[J_F(\mu_\theta):=\mathbb{E}_{\substack{s_0\sim F\\s_{t+1}\sim P(\cdot|s_t,\beta(s_t)))}}[SR(r_0(\bar{s}_0^\theta,\mu_\theta(\bar{s}_0^\theta),\bar{s}_1^\theta)
,\dots,r_{T-1}(\bar{s}_{T-1}^\theta,\mu_\theta(\bar{s}_{T-1}^\theta),\bar{s}_T^\theta))]\]
where $\beta(s_t)$ is an arbitrary policy, and where the effect of $
\mu_\theta$ on the trajectory can be calculated using 
\[\bar{s}_t^\theta:=\left(s_{t}^{exo},\frac{\Areturns_{t}\circprod\mu_\theta(\bar{s}_{t-1}^\theta)}{\Areturns_{t}^\top \mu_\theta(\bar{s}_{t-1}^\theta)}\right)\,,\]
for $t\geq 1$, 
while $\bar{s}_0^\theta:=s_0$. 
Hence,
\begin{eqnarray}\label{eq:SGD}
\nabla_\theta J_F(\mu_\theta):=\mathbb{E}
[\nabla_\theta SR(r_0(\bar{s}_0^\theta,\mu_\theta(\bar{s}_0^\theta),\bar{s}_1^\theta),
\dots,r_{T-1}(\bar{s}_{T-1}^\theta,\mu_\theta(\bar{s}_{T-1}^\theta),\bar{s}_T^\theta))]\,,
\end{eqnarray}
where $\nabla_\theta SR$ can be obtained by backpropagation using the chain rule. This leads to the following stochastic gradient step:
\[    \theta_{k+1} = \theta_k + \alpha \nabla_\theta SR(r_0(\bar{s}_0^\theta,\mu_\theta(\bar{s}_0^\theta),\bar{s}_1^\theta), \dots,r_{T-1}(\bar{s}_{T-1}^\theta,\mu_\theta(\bar{s}_{T-1}^\theta),\bar{s}_T^\theta)\,,\]
with $\alpha>0$ as the step size.

\oldText{
\EDcomments{OLD TEXT BELOW}

where we also suppress the inclusion of risk free returns in the definition of this Sharpe ratio, and consider an unbiased estimation of the standard deviation. Now, if we consider the policy to be a function of the parameter set $\theta$ in a neural network, $a_t=\mu_\theta(s_t)$, the goal would be to find a sub-optimal $\theta$ with regard to the Sharpe ratio objective function. The whole process of performing the portfolio management is shown in Figure \ref{fig:AOmodel}, where the agent chooses a batch of size $T$, through this batch he sequentially observes the states, makes an action by using the parameterized policy $\mu_\theta(s_t)$, then uses this action together with the past price series from the environment to form the state $s_{t+1}$ and decides on the action in the next period. This will provide him with a sequence of rewards by which he can compute the $SR$ function in a forward pass, and then update the parameter set $\theta$ in a backward pass by using gradient ascent:
\begin{equation}
    \theta = \theta + \alpha \nabla_\theta SR(r_1(s_0,\mu(s_0)),...,r_T(s_{T-1},\mu(s_{T-1})))
\end{equation}
}

\oldSectionThree{
\section{WaveCorr}

There are several considerations that go into the design of the network for the trading policy function $\mu_{\theta}$. First, the network should have the capacity to handle long time series data, which allows for extracting long-term dependencies. Second, the network should be flexible in its design for capturing dependencies across time and assets. Third, the network should not be overly-parameterized when applied to longer time series and increasing number of assets. A natural choice of the architecture for time series analysis is CNN, which is known to outperform recurrent architectures in sequence modeling \citep{bai2018empirical}. The mechanism of kernel as the building block of causal convolutions provides remarkable flexibility in devising new components for better capturing the characteristics of input time series. However, for longer time series, causal convolutions could suffer from the issue of over-parameterization ( over-fitting) or gradient explosion/vanishing due to the fast-growing size/number of kernels in use. In this study, we adopt the WaveNet structure \citep{oord2016wavenet}, which is to use \textit{dilated} causal convolutions along with residual and skip connections to increase the receptive field of the network. Inspired by similar dilated convolution based models \citep{bai2018empirical,kalchbrenner2016neural,lea2017temporal} we simplify the structure by removing the gated activation units, conditioning, and context stacks.


Directly applying the WaveNet model to our portfolio management problem will solely capture dependency across time, but overlook the cross-asset correlation information. This is because the convolutions embedded in the WaveNet mdoel are 1D. There are several reasons why extending a convolutional architecture like WaveNet to capture cross-asset dependencies is not straightforward and requires great care. First, extending to 2D convolutions would significantly increase the number of parameters in the model, which makes it more prone to the issue of over-fitting. This is specifically an important problem when the network is used in an RL framework that is already notorious for having high potential of extreme over-fitting. Second, and most importantly, simply extending to 2D convolutions would lead to a network that is sensitive to the ordering of assets in input data.

Our goal is to design a network that can, on one hand, leverage on the superior performance of WaveNet in capturing time dependency, and on the other hand, can effectively extract cross-asset dependency without becoming overly parameterized. We first present the general architecture of WaveCorr in Figure \ref{fig:generalArc}.

\begin{figure}[h]
	\centering
	\includegraphics[scale=.5]{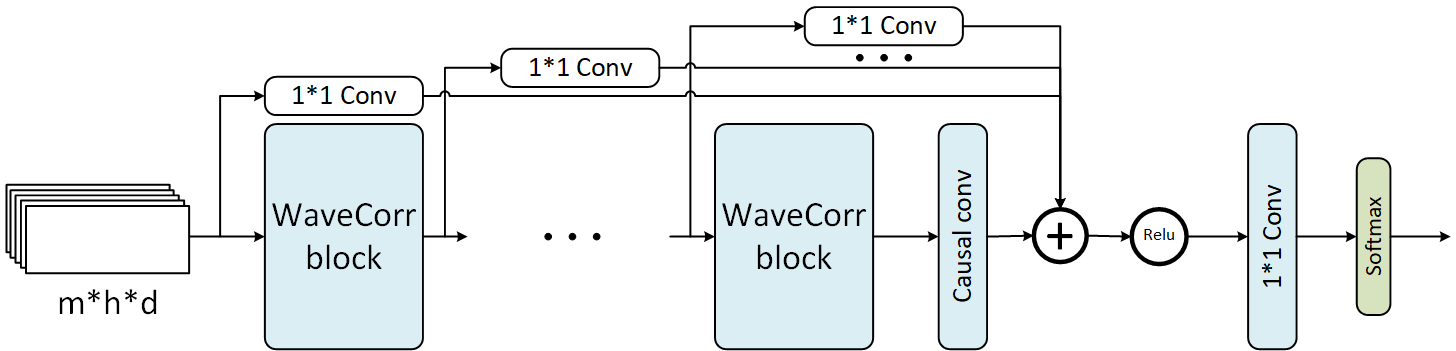}
	\caption{The architecture of the WaveCorr model}
	\label{fig:generalArc}	
\end{figure}

Here, the network takes as input a tensor of dimension $m\times h \times d$, where $m:$ the number of assets, $h:$ the size of look-back time window, $d:$ the number of channels (number of features for each asset), and generates as output an $m$-dimensional wealth allocation vector. The WaveCorr blocks, which play the key role for extracting cross time/asset dependencies, form the body of the architecture. In order to provide more flexibility for the choice of $h$, we define a causal convolution after the sequence of WaveCorr blocks to adjust the receptive field so that it includes the whole length of the input time series. Also, similar to the WaveNet structure, we use skip connections in our architecture. The structure of the WaveCorr (residual) block is presented in Figure \ref{fig:WaveCorr}.

\begin{figure}[h]
	\centering
	\includegraphics[scale=.5]{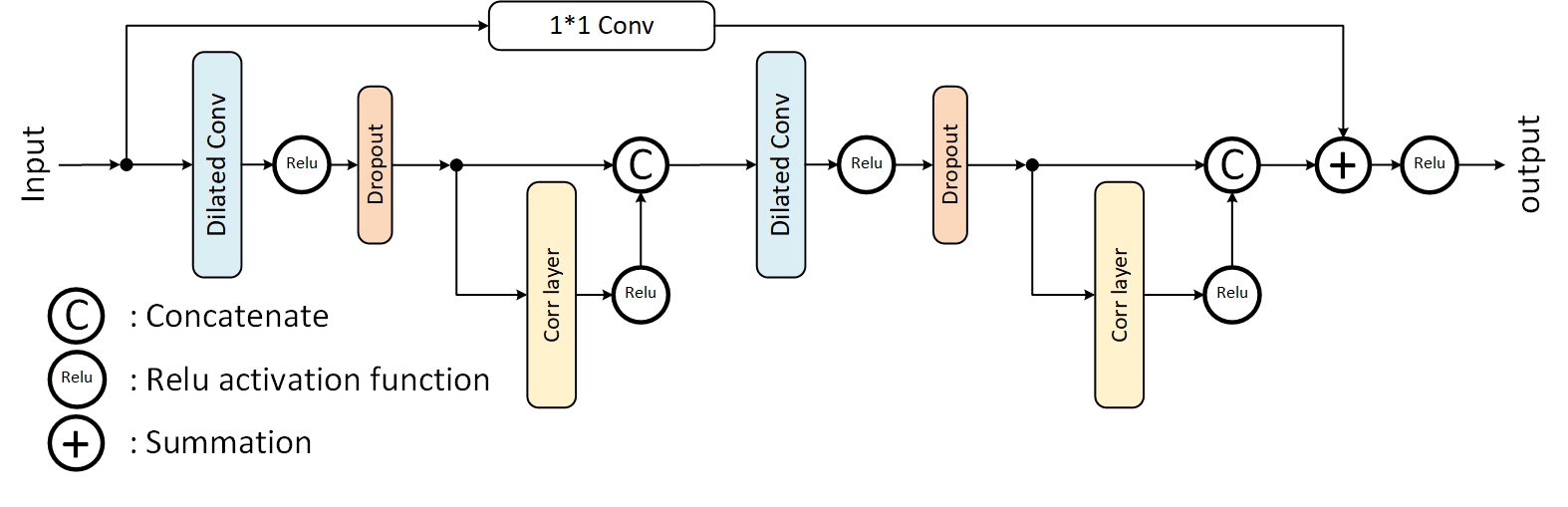}
	\caption{WaveCorr residual block}
	\label{fig:WaveCorr}	
\end{figure}

The design of the residual block in WaveCorr extends a simplified variation of the residual block in WaveNet by adding the correlation layers (and Relu, concatenation operations follow right after). In particular, the block includes two layers of dilated causal convolutions followed by Relu activation functions and dropout layers. Having an input of dimensions $m\times h \times d$, the convolutions output tensors of dimension $m \times h \times d'$ where each slice of the output tensor, i.e. an $m \times 1 \times d$ matrix, contains the dependency information of a variable (an asset) over time. By applying different dilation rates in each WaveCorr block, the model is able of extracting the dependency information for a longer time horizon. A dropout layer with a rate of $50\%$ is considered to prevent over-fitting, whereas for the gradient explosion/vanishing prevention mechanism of residual connection we use a $1 \times 1$ convolution on the top of Figure \ref{fig:WaveCorr} to ensure the summation operation is over tensors of the same shape. The \corrlayer generates an output tensor of dimensions $m\times h \times 1$ from an $m\times h \times d$ input, but each slice of the output tensor, i.e. an $m \times 1 \times 1$ matrix, should contain cross-asset dependency information.
The concatenation operator combines the cross-asset dependency information obtained from \textit{Corr} layer and the cross-time dependency information obtained before applying the convolutions. The output is then fed into another dilated convolution for extracting dependency information for a longer time horizon.

We propose here a property that can be used to further guide the design, namely 
the property of asset permutation invariance. Let $\sigma: \mathbb{R}^{m\times h \times d} \rightarrow \mathbb{R}^{m \times h \times d}$ denote a permutation operator over a tensor $\mathcal{T}$ such that $\sigma(\mathcal{T})[i,:,:] = \mathcal{T}[\pi(i),:,:]$, where $\pi: \{1,...,m\} \rightarrow \{1,...,m\}$ is a bijective function. The operator is overloaded for a set of tensors ${\cal S} \subseteq \mathbb{R}^{m \times h \times d}$ such that $\sigma({\cal S}):=\{\sigma(\mathcal{T})\}_{\mathcal{T} \in {\cal S}} \subseteq \mathbb{R}^{m\times h \times d}$. 

\begin{definition} (Asset Permutation Invariance) \label{pi}
A block, which takes an input tensor $\mathcal{T} \in \mathbb{R}^{m \times h \times d}$ and generates an output tensor $\mathcal{T}^o \in \mathbb{R}^{m \times h' \times d'}$, is permutation invariant if given any two input tensors $\mathcal{T}, \mathcal{T}' \in  \mathbb{R}^{m \times h \times d}$ that satisfy $\mathcal{T}=\sigma(\mathcal{T}')$, the corresponding set of output tensors that can possibly be generated from the block, denoted respectively by ${\cal S}$, ${\cal S}'$ satisfy $\mathcal{S} = \sigma({\mathcal{S}')}$.
\end{definition}

This definition is motivated by the idea that the final outputs should differ only in the ordering, but not the values, of their entries when a different ordering of assets is applied in the input tensor. One can verify, for instances, that a dilated convolution block is permutation invariant or a block consisting of a number of permutation invariant sub-blocks is permutation invariant (see Appendix \ref{sec:app:proofSecArch}). We detail now our design of the \corrlayer via \algtitle \ref{alg:corr_layer}. The operator $CC$ maps $\mathbb{R}^{(m+1) \times h \times d} \rightarrow \mathbb{R}^{1 \times h \times 1}$ by using a kernel of size $(m+1) \times 1$, and the operator $Concat_i$ concatenates two tensors on the $i$-th dimension


\begin{algorithm}
\caption{\textit{Corr} layer}
\label{alg:corr_layer}
\SetAlgoLined
\KwResult{Tensor that contains correlation information, $\mathcal{O}_{out}$ of dimension $m \times h \times 1$}
\textbf{Inputs}: Tensor $\mathcal{O}_{in}$ of dimension $m \times h \times d$\;
Define an empty tensor $\mathcal{O}_{out} = Null$\;
\For{$i=1:m$}
{
$\mathcal{O}_{mdl}=Concat_1(\mathcal{O}_{in}[i,:,:],\mathcal{O}_{in}) \in \mathbb{R}^{(m+1) \times h \times d}$\;
\eIf{$\mathcal{O}_{out} = Null$}{$ \mathcal{O}_{out} = CC(\mathcal{O}_{mdl})$}
{
Set $\mathcal{O}_{out} =Concat_1(\mathcal{O}_{out},CC(\mathcal{O}_{mdl}))$
}
}
\end{algorithm}

In the above \algtitle, the kernel is applied to a tensor ${\cal O}_{mdl}$ constructed from adding $i$-th row of the input tensor on the top of the input tensor. Concatenating the output tensors from each run gives the final output tensor. The above \algtitle ensures that the output has the same height $m$ as the input. Figure \ref{fig:corrLayer} gives an example for a $5\times 1\times 1$ input tensor (5 assets). We show in Appendix C that based on the \corrlayer, the WaveCorr residual block (Figure \ref{fig:WaveCorr}) satisfies the property of asset permutation invariance.

\begin{figure}[h]
	\centering
	\includegraphics[scale=.5]{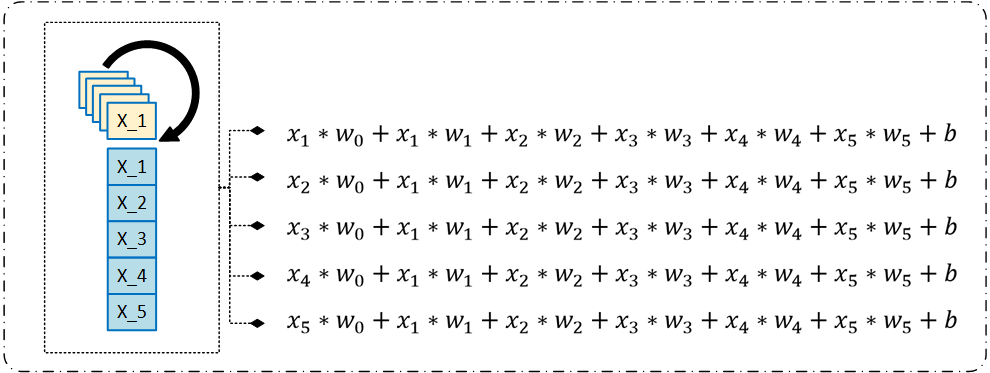}
	\caption{An example of the \textit{Corr} layer over 5 assets}
	\label{fig:corrLayer}
\end{figure}


For our portfolio management problem, we consider a look-back window of 30 days as input to the model, which leads us to consider three WaveCorr blocks in Figure \ref{fig:generalArc}. In addition, for efficiency, we consider one, instead of two, \corrlayer per residual block. The details of each layer in terms of the kernel size, number of channels, dilation rates, and types of activation functions are shown in Table \ref{tbl:Wavenet_structure}.

\begin{table}[h]
\caption{The structure of the network}
\centering
\begin{tabular}{lccccc}
\toprule
Layer & Input shape & Output shape & Kernel & Activation & Dilation rate \\
\midrule
Dilated conv & $(m \times h \times d)$ & $(m \times h \times 8)$ & $(1 \times 3)$ & Relu & 1\\
Dilated conv & $(m \times h \times 8)$ & $(m \times h \times 8)$ & $(1 \times 3)$ & Relu & 1\\
\textit{Corr} layer & $(m \times h \times 8)$ & $(m \times h \times 1)$ & $([m+1] \times 1)$ & Relu & -\\

\cmidrule(r){1-6}

Dilated conv & $(m \times h \times 9)$ & $(m \times h \times 16)$ & $(1 \times 3)$ & Relu & 2\\
Dilated conv & $(m \times h \times 16)$ & $(m \times h \times 16)$ & $(1 \times 3)$ & Relu & 2\\
\textit{Corr} layer & $(m \times h \times 16)$ & $(m \times h \times 1)$ & $([m+1] \times 1)$ & Relu & -\\

\cmidrule(r){1-6}

Dilated conv & $(m \times h \times 17)$ & $(m \times h \times 16)$ & $(1 \times 3)$ & Relu & 4\\
Dilated conv & $(m \times h \times 16)$ & $(m \times h \times 16)$ & $(1 \times 3)$ & Relu & 4\\
\textit{Corr} layer & $(m \times h \times 16)$ & $(m \times h \times 1)$ & $([m+1] \times 1)$ & Relu & -\\

\cmidrule(r){1-6}

Causal conv & $(m \times h \times 17)$ & $(m \times h \times 16)$ & $(1 \times [h-28])$ & Relu & -\\
$1 \times 1$ conv & $(m \times h \times 16)$ & $(m \times h \times 1)$ & $(1 \times 1)$ & Softmax & -\\
\bottomrule

\end{tabular}
\label{tbl:Wavenet_structure}
\end{table}

We should mention here a recent work of \cite{zhang2020cost}, where the authors propose an architecture that also take both sequential and cross-asset dependency into consideration. Their architecture, from a high level perspective, is more complex than ours (Figure \ref{fig:generalArc}) in that theirs involves two networks, one LSTM and one CNN, whereas ours is built solely on CNN. Our architecture is thus simpler to implement, easier to backtest, and allows for more efficient computation. The most noticeable difference between their design and ours is at the level of \corrlayer. They use an $m \times 1$ kernel to extract dependency across assets and apply a standard padding trick to keep the input tensor invariant in size. Their approach suffers from two issues: first, the kernel in their design may capture only partial dependency information (see the appendix for a more detailed discussion), and second, most problematically, their design is not permutation invariant and thus the performance of their network can be highly sensitive to the ordering of assets. This second issue is further demonstrated in the next section.

}

\section{The New Permutation Invariant WaveCorr Architecture}\label{sec:architecture}

There are several considerations that go into the design of the network for the portfolio policy network $\mu_{\theta}$.  First, the network should have the capacity to handle long historical time series data, which allows for extracting long-term dependencies across time. Second, the network should be flexible in its design for capturing dependencies across a large number of available assets. Third, the network should be parsimoniously parameterized to achieve these objectives without being prone to overfitting. To this end, the WaveNet structure \citep{oord2016wavenet} offers a good basis for developing our architecture and was employed in \cite{zhang2020cost}. Unfortunately, a direct application of WaveNet in portfolio management struggles at processing the cross-asset correlation information. This is because the convolutions embedded in the WaveNet model are 1D and extending to 2D convolutions increases the number of parameters in the model, which makes it more prone to the issue of over-fitting, a notorious issue particuarly in RL. Most importantly, naive attempts at adapting WaveNet to account for such dependencies (as done in \cite{zhang2020cost}) 
can make the network become sensitive to the ordering of the assets in the input data, an issue that we will revisit below.




We first present the general architecture of WaveCorr in Figure \ref{fig:generalArc}.
\begin{figure}[h]
	\centering
	\includegraphics[scale=.5]{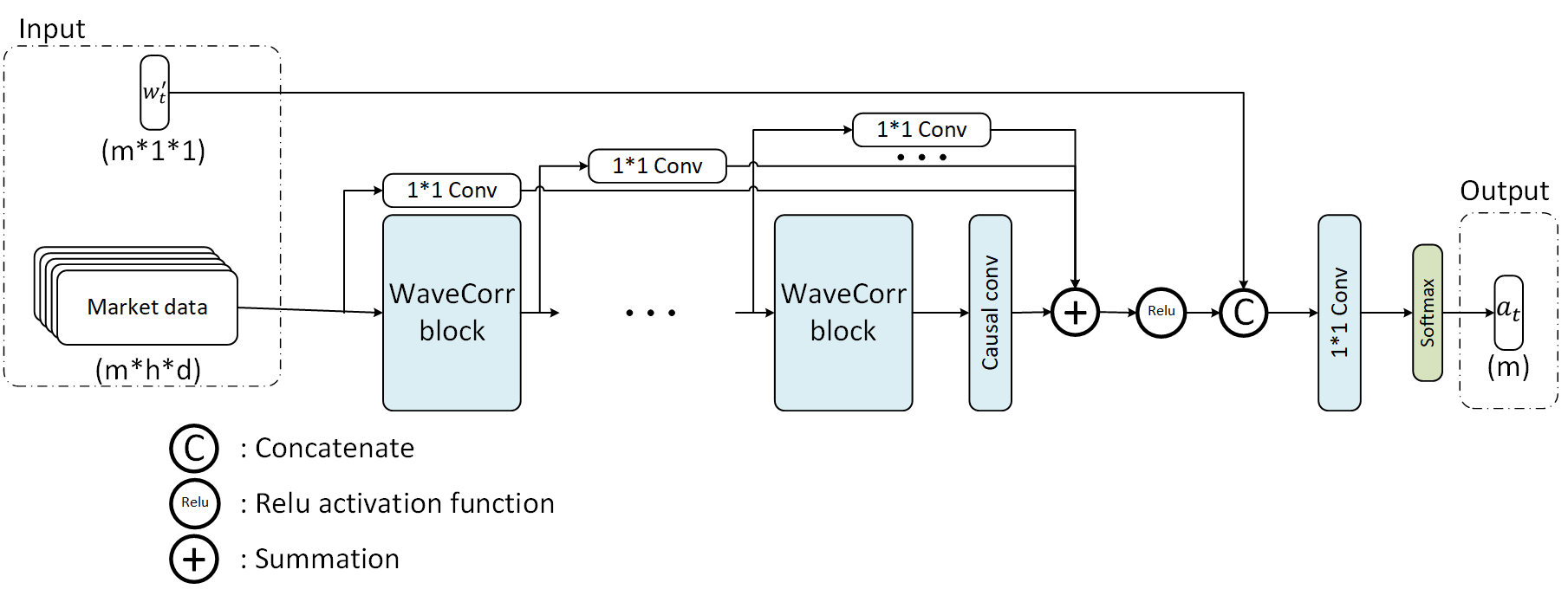}
	\caption{The architecture of the WaveCorr policy network}
	\label{fig:generalArc}	
\end{figure}
Here, the network takes as input a tensor of dimension $m\times h \times d$, where $m:$ the number of assets, $h:$ the size of look-back time window, $d:$ the number of channels (number of features for each asset), and generates as output an $m$-dimensional wealth allocation vector. The WaveCorr blocks, which play the key role for extracting cross time/asset dependencies, form the body of the architecture. In order to provide more flexibility for the choice of $h$, we define a causal convolution after the sequence of WaveCorr blocks to adjust the receptive field so that it includes the whole length of the input time series. Also, similar to the WaveNet structure, we use skip connections in our architecture. 

\begin{figure}[h]
	\centering
	\includegraphics[scale=.5]{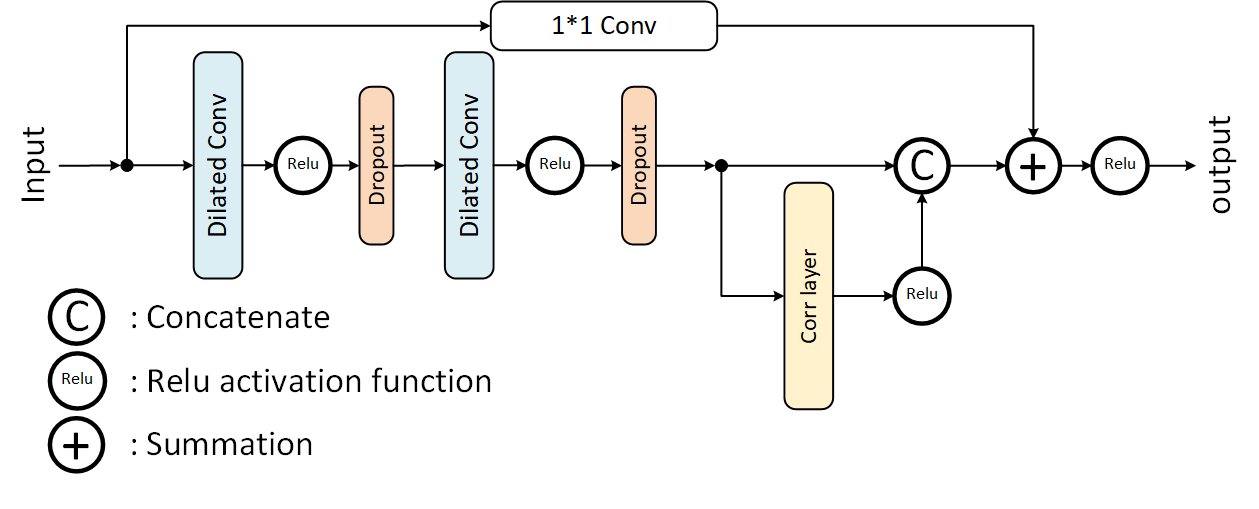}
	\caption{WaveCorr residual block}
	\label{fig:WaveCorr}	
\end{figure}

\EDcomments{Could this network could model for asset $i$:
\[E[((\xi_i-\mu_i) - (\xi_j-\mu_j))^+ ]\]
or
\[E[((\xi_j-\mu_j)- (\xi_i-\mu_i))^+ ]\]
}
The design of the WaveCorr residual block in WaveCorr extends a simplified variation \citep{bai2018empirical} of the residual block in WaveNet  
by adding our new correlation layers (and Relu, concatenation operations following right after). As shown in Figure \ref{fig:WaveCorr}, the block includes two layers of dilated causal convolutions followed by Relu activation functions and dropout layers. Having an input of dimensions $m\times h \times d$, the convolutions output tensors of dimension $m \times h \times d'$ where each slice of the output tensor, i.e. an $m \times 1 \times d$ matrix, contains the dependency information of each asset over time. By applying different dilation rates in each WaveCorr block, the model is able of extracting the dependency information for a longer time horizon. A dropout layer with a rate of $50\%$ is considered to prevent over-fitting, whereas for the gradient explosion/vanishing prevention mechanism of residual connection we use a $1 \times 1$ convolution (presented on the top of Figure \ref{fig:WaveCorr}), which inherently ensures that the summation operation is over tensors of the same shape. The \corrlayer generates an output tensor of dimensions $m\times h \times 1$ from an $m\times h \times d$ input, where each slice of the output tensor, i.e. an $m \times 1 \times 1$ matrix, is meant to contain cross-asset dependency information.
The concatenation operator combines the cross-asset dependency information obtained from the \corrlayer with the cross-time dependency information obtained from the causal convolutions. 

Before defining the \corrlayer, we take a pause to introduce a property that will be used to further guide its design, namely 
the property of asset permutation invariance. This property is motivated by the idea that the set of possible investment policies that can be modeled by the portfolio policy network should not be affected by the way the assets are indexed in the problem. On a block per block level, we will therefore impose that, when the asset indexing of the input tensor is reordered, the set of possible mappings obtained should also only differ in its asset indexing. More specifically, we let $\sigma: \mathbb{R}^{m\times h \times d} \rightarrow \mathbb{R}^{m \times h \times d}$ denote a permutation operator over a tensor $\mathcal{T}$ such that $\sigma(\mathcal{T})[i,:,:] = \mathcal{T}[\pi(i),:,:]$, where $\pi: \{1,...,m\} \rightarrow \{1,...,m\}$ is a bijective function. Furthermore, we consider $\sigma^{-1}: \mathbb{R}^{m\times h \times d'} \rightarrow \mathbb{R}^{m \times h \times d'}$ denote its \quoteIt{inverse} such that $\sigma^{-1}(\mathcal{O})[i,:,:]:=\mathcal{O})[\pi^{-1}(i),:,:]$, with $\mathcal{O}\in\mathbb{R}^{m\times h \times d'}$.

\begin{definition} (Asset Permutation Invariance) \label{pi}
A block capturing a set of functions $\mathcal{B}\subseteq \{B:\mathbb{R}^{m \times h \times d}\rightarrow \mathbb{R}^{m \times h' \times d'}\}$ is \textbf{asset permutation invariant} if given any permutation operator $\sigma$, we have that $\{\sigma^{-1}\circcomp B\circcomp \sigma: B\in\mathcal{B}\} = \mathcal{B}$, where $\circcomp$ stands for function composition.
\end{definition}

One can verify, for instances, that all the blocks described so far in WaveCorr are permutation invariant and that asset permutation invariance is preserved under composition (see Appendix \ref{sec:app:proofPI}). 

With this property in mind, we can now detail the design of a permutation invariant \corrlayer via \algtitle \ref{alg:corrlayer}, where we denote as $CC:\mathbb{R}^{(m+1) \times h \times d} \rightarrow \mathbb{R}^{1 \times h \times 1}$ the operator that applies an $(m+1)\times 1$ convolution, and as $Concat_1$  the operator that concatenates two tensors along the first dimension.
In Procedure \ref{alg:corrlayer}, the kernel is applied to a tensor ${\cal O}_{mdl}\in \mathbb{R}^{(m+1) \times h \times d}$ constructed from adding the $i$-th row of the input tensor on the top of the input tensor. Concatenating the output tensors from each run gives the final output tensor.
Figure \ref{fig:corrLayer} gives an example for the case with $m=5$, and $h=d=1$. Effectively, one can show that \corrlayer satisfies asset permutation invariance (proof in Appendix).

\begin{algorithm}[h]
\caption{\textit{Corr} layer\label{alg:corrlayer}}
\label{alg:corr_layer}
\SetAlgoLined
\KwResult{Tensor that contains correlation information, $\mathcal{O}_{out}$ of dimension $m \times h \times 1$}
\textbf{Inputs}: Tensor $\mathcal{O}_{in}$ of dimension $m \times h \times d$\;
Define an empty tensor $\mathcal{O}_{out}$ of dimension $0\times h\times 1$\;
\For{$i=1:m$}
{
Set $\mathcal{O}_{mdl}=Concat_1(\mathcal{O}_{in}[i,:,:],\mathcal{O}_{in}) $\;
Set $\mathcal{O}_{out} =Concat_1(\mathcal{O}_{out},CC(\mathcal{O}_{mdl})))$\;
}
\end{algorithm}

\begin{proposition}\label{thm:corrlayer}
The \corrlayer block satisfies asset permutation invariance.
\end{proposition}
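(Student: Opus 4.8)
The plan is to verify Definition \ref{pi} directly for the set of functions $\mathcal{B}$ captured by the \corrlayer, i.e. to show that for any permutation operator $\sigma$ induced by a bijection $\pi$, one has $\{\sigma^{-1}\circcomp B\circcomp \sigma: B\in\mathcal{B}\}=\mathcal{B}$. Since the \corrlayer's only free parameters are the weights of the single shared $(m+1)\times 1$ convolution $CC$, the set $\mathcal{B}$ is indexed by the choice of kernel for $CC$; I would fix one such kernel, call the resulting layer $B$, and show that $\sigma^{-1}\circcomp B\circcomp\sigma$ equals the \corrlayer $B'$ obtained from a \emph{possibly different} but valid choice of kernel — in fact the \emph{same} kernel, as I explain below. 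Because the map $B\mapsto\sigma^{-1}\circcomp B\circcomp\sigma$ is its own inverse (using $\sigma^{-1}$ for $\pi^{-1}$), establishing the inclusion $\{\sigma^{-1}\circcomp B\circcomp\sigma\}\subseteq\mathcal{B}$ for all $\sigma$ automatically yields the reverse inclusion, giving equality.

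The key computation is to track what Procedure \ref{alg:corrlayer} does to a permuted input. First I would unwind the notation: writing $\mathcal{O}_{out}=B(\mathcal{O}_{in})$, the construction shows that row $i$ of the output is $\mathcal{O}_{out}[i,:,:]=CC\big(Concat_1(\mathcal{O}_{in}[i,:,:],\mathcal{O}_{in})\big)$, a function of the $i$-th asset's slice together with the whole (ordered) input block. The crucial observations are (i) $CC$ uses an $(m+1)\times 1$ kernel, so it collapses the entire first dimension — reordering the last $m$ rows of its $(m+1)\times h\times d$ argument changes the output, but (ii) in the appendix the paper evidently intends $CC$ to be applied in a way that is itself symmetric in those last $m$ rows, OR, more carefully, the permutation invariance is of the \emph{set} of functions, not of each function: I would argue that feeding $\sigma(\mathcal{O}_{in})$ to the loop produces, at loop index $j=\pi^{-1}(i)$, the tensor $Concat_1(\mathcal{O}_{in}[i,:,:],\sigma(\mathcal{O}_{in}))$, and that applying the kernel corresponding to a correspondingly permuted indexing of $CC$'s last $m$ input-rows recovers $CC(Concat_1(\mathcal{O}_{in}[i,:,:],\mathcal{O}_{in}))=\mathcal{O}_{out}[i,:,:]$. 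Collecting these over all $j$ and then applying $\sigma^{-1}$ to relabel rows back from the $j$-indexing to the $i$-indexing shows $\sigma^{-1}(B'(\sigma(\mathcal{O}_{in})))=B(\mathcal{O}_{in})$ for the appropriate $B'\in\mathcal{B}$.

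The main obstacle — and the step I would be most careful about — is exactly the interplay between the kernel of $CC$ and the reordering of the last $m$ rows of $\mathcal{O}_{mdl}$. A generic $(m+1)\times 1$ convolution is \emph{not} symmetric under permuting those rows, so for the argument to go through one needs either that the appendix's definition of $CC$ restricts to kernels that are symmetric in the last $m$ coordinates (a weight-sharing constraint), or that $\mathcal{B}$ is defined to range over all such kernels so that permuting the kernel weights stays inside $\mathcal{B}$. I would state this precisely, note which convention is in force, and then show the permutation of the loop index, the permutation of the kernel's last $m$ taps, and the outer $\sigma^{-1}$ exactly compose to the identity on the function set. The remaining bookkeeping — that $Concat_1$ commutes appropriately with relabeling, that the shared kernel is applied identically at every loop iteration, and that the output height stays $m$ — is routine and I would only sketch it, referring to the composition lemma already granted in the excerpt for combining this with the other blocks.
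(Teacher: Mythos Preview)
Your approach is essentially the paper's: parametrize $\mathcal{B}$ by the kernel $(w,b)$ of $CC$, and for a given $B_{w,b}$ and permutation $\sigma$ (with bijection $\pi$) exhibit the reparametrized $B_{w',b'}$ with $w_0'=w_0$, $w_j'=w_{\pi(j)}$, $b'=b$ so that $\sigma^{-1}\circcomp B_{w',b'}\circcomp\sigma=B_{w,b}$, then use the one-inclusion-suffices argument (the paper packages this as a lemma). One slip to clean up: you write ``in fact the \emph{same} kernel'' early on, but as you correctly argue later, the last $m$ taps of the kernel must be permuted --- your second option (``permuting the kernel weights stays inside $\mathcal{B}$'') is the correct one and is precisely what the paper does, so drop the ``same kernel'' remark and commit to that route.
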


Table \ref{tbl:Wavenet_structure} summarizes the details of each layer involved in the WaveCorr architecture: including kernel sizes, internal numbers of channels, dilation rates, and types of activation functions. Overall, the following proposition confirms that this WaveCorr portfolio policy network satisfies asset permutation invariance (see Appendix for proof). 

\begin{proposition}\label{thm:permInv}
The WaveCorr portfolio policy network architecture satisfies asset permutation invariance.
\end{proposition}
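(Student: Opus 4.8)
The plan is to prove Proposition \ref{thm:permInv} by induction on the layers of the network, using the fact (stated earlier and proven in Appendix \ref{sec:app:proofPI}) that asset permutation invariance is preserved under function composition. Since the WaveCorr policy network is the composition of the layers listed in Table \ref{tbl:Wavenet_structure} (three WaveCorr residual blocks, a final causal convolution, and a $1\times1$ convolution with softmax, plus the skip connections in Figure \ref{fig:generalArc}), it suffices to show that each constituent block is asset permutation invariant; composition then closes the argument.

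First I would establish the base cases, i.e. that each atomic operation used in the architecture is asset permutation invariant in the sense of Definition \ref{pi}. The key observations are: (i) any operation that acts identically and independently on each asset slice $\mathcal{T}[i,:,:]$ — this includes the $1\times k$ dilated causal convolutions, the causal convolution, the $1\times1$ convolution, the ReLU and softmax activations, and the dropout layers — commutes with the permutation operator $\sigma$ in the exact sense that $\sigma^{-1}\circcomp B\circcomp \sigma = B$ for every such $B$, so the set equality $\{\sigma^{-1}\circcomp B\circcomp\sigma : B\in\mathcal{B}\}=\mathcal{B}$ holds trivially (here one must be slightly careful that dropout is a \emph{set} of functions indexed by the random mask, and permuting the asset index of the mask gives back the same set); (ii) the concatenation along the channel dimension of two permutation invariant blocks sharing the same input is permutation invariant, since $\sigma$ and $\sigma^{-1}$ act only on the first (asset) dimension and therefore distribute over $Concat_3$; (iii) the additive skip/residual connections combine permutation invariant sub-blocks of matching shape, and addition likewise commutes with the first-dimension permutation; and (iv) the \corrlayer is permutation invariant by Proposition \ref{thm:corrlayer}. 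Most of these are routine and I would relegate them to the appendix, citing the composition lemma.

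Next I would assemble these pieces: a single WaveCorr residual block (Figure \ref{fig:WaveCorr}) is the composition/combination of two dilated causal convolutions, ReLU and dropout layers, a \corrlayer, a channel concatenation, a further dilated convolution, and a $1\times1$ residual connection — each a permutation invariant block — so by the composition and concatenation/addition closure properties it is itself permutation invariant. Stacking three such blocks, applying the final causal convolution to widen the receptive field, and then the $1\times1$ softmax convolution, all remain permutation invariant by the same closure argument; the skip connections feeding the output are handled by the additive-combination observation. Hence the full map from the $m\times h\times d$ input tensor to the $m$-dimensional allocation vector is permutation invariant, which is the claim. (One small point to address cleanly: the output is an $m$-vector rather than an $m\times h'\times d'$ tensor, so I would note that after the softmax layer the $h$ and channel dimensions are singletons and the relevant permutation on the output is just the induced permutation of the $m$ portfolio weights, consistent with $\sigma^{-1}$.)

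The main obstacle I anticipate is not any single step but rather being careful about \emph{what} is being permuted and in which dimension at each stage, and about the fact that Definition \ref{pi} is a statement about \emph{sets} of functions (because of dropout randomness and because a ``block'' captures a family $\mathcal{B}$), not about a single deterministic map. In particular, I must verify that (a) permuting the asset index before dropout and then applying $\sigma^{-1}$ afterward yields exactly the same family of realized functions, and (b) the intermediate tensors retain height $m$ in their first dimension throughout — which is exactly what the \corrlayer was designed to guarantee, and what makes the concatenation and residual-addition steps well-typed. Once those bookkeeping points are pinned down, the proof is a mechanical induction using the already-established composition lemma and Proposition \ref{thm:corrlayer}.
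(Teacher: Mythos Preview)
Your proposal is correct and follows essentially the same route as the paper: verify that each atomic block (dilated/causal/$1\times1$ convolutions, ReLU, dropout, softmax, sum) is asset permutation invariant, invoke Proposition~\ref{thm:corrlayer} for the \corrlayer, establish closure under composition, and conclude. If anything, you are slightly more thorough than the paper, which treats dropout simply as the identity (feed-through) rather than as a family of masked functions, and does not explicitly single out the channel-wise concatenation step that you rightly flag; both are harmless, but your bookkeeping is cleaner.
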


\begin{figure}[h]
	\centering
	\includegraphics[scale=.5]{figures/corr2.png}
	\caption{An example of the \textit{Corr} layer over 5 assets}
	\label{fig:corrLayer}
\end{figure}

\begin{table}[h]
\caption{The structure of the network}
\centering
\begin{tabular}{lccccc}
\toprule
Layer & Input shape & Output shape & Kernel & Activation & Dilation rate \\
\midrule
Dilated conv & $(m \times h \times d)$ & $(m \times h \times 8)$ & $(1 \times 3)$ & Relu & 1\\
Dilated conv & $(m \times h \times 8)$ & $(m \times h \times 8)$ & $(1 \times 3)$ & Relu & 1\\
\textit{Corr} layer & $(m \times h \times 8)$ & $(m \times h \times 1)$ & $([m+1] \times 1)$ & Relu & -\\

\cmidrule(r){1-6}

Dilated conv & $(m \times h \times 9)$ & $(m \times h \times 16)$ & $(1 \times 3)$ & Relu & 2\\
Dilated conv & $(m \times h \times 16)$ & $(m \times h \times 16)$ & $(1 \times 3)$ & Relu & 2\\
\textit{Corr} layer & $(m \times h \times 16)$ & $(m \times h \times 1)$ & $([m+1] \times 1)$ & Relu & -\\

\cmidrule(r){1-6}

Dilated conv & $(m \times h \times 17)$ & $(m \times h \times 16)$ & $(1 \times 3)$ & Relu & 4\\
Dilated conv & $(m \times h \times 16)$ & $(m \times h \times 16)$ & $(1 \times 3)$ & Relu & 4\\
\textit{Corr} layer & $(m \times h \times 16)$ & $(m \times h \times 1)$ & $([m+1] \times 1)$ & Relu & -\\

\cmidrule(r){1-6}

Causal conv & $(m \times h \times 17)$ & $(m \times 1 \times 16)$ & $(1 \times [h-28])$ & Relu & -\\
$1 \times 1$ conv & $(m \times 1 \times 17)$ & $(m \times 1 \times 1)$ & $(1 \times 1)$ & Softmax & -\\
\bottomrule

\end{tabular}
\label{tbl:Wavenet_structure}
\end{table}

Finally, it is necessary to discuss some connections with the recent work of \cite{zhang2020cost}, where the authors propose an architecture that also takes both sequential and cross-asset dependency into consideration. Their proposed architecture, from a high level perspective, is more complex than ours in that theirs involves two sub-networks, one LSTM and one CNN, whereas ours is built solely on CNN. Our architecture is thus simpler to implement, less susceptible to overfitting,  
and allows for more efficient computation. The most noticeable difference between their design and ours is at the level of the \corrlayer block, where they use a convolution with a $m \times 1$ kernel to extract dependency across assets and apply a standard padding trick to keep the output tensor invariant in size. Their approach suffers from two issues (see Appendix \ref{sec:app:CSPPN} for details): first, the kernel in their design may capture only partial dependency information, and second, most problematically, their design is not asset permutation invariant and thus the performance of their network can be highly sensitive to the ordering of assets. This second issue is further confirmed empirically in section \ref{sec:sensitivityAnal}. 

\begin{remark}
To the best of our knowledge, all the previous literature on permutation invariance (PI) (see for e.g. \cite{deepsets,cai2021,LiWang:2021}) of neural networks have considered a definition that resembles Definition 1 in \cite{cai2021}, which requires that all functions $B \in \mathcal{B}$ are PI, i.e. that for any permutation operator $\sigma$, we have that $\sigma^{-1}\circ B\circ \sigma = B$. Definition \ref{pi} is more flexible as it does not impose the PI property on every functions of the set $\mathcal{B}$ but only on the set as a whole, i.e. if $B\in\mathcal{B}$ then $\sigma^{-1}\circ B\circ \sigma$ is also in $\mathcal{B}$ for all $\sigma$. This distinction is crucial given that \cite{cai2021} themselves observe that correlations cannot be modeled using their version of permutation invariant policy networks "as they are agnostic to identities of entities". For instance, the architecture proposed in figures 2 and 3 violates Cai et al.'s PI definition and therefore does not suffer from this deficiency.
\end{remark}


\section{Experimental results}\label{sec:Experimental}
In this section, we present the results of a series of experiments evaluating the empirical performance of our WaveCorr DRL approach. We start by presenting the experimental set-up. We follow with our main study that evaluates WaveCorr against 
a number of popular benchmarks. We finally shed light on the superior performance of WaveCorr with comparative studies that evaluate the sensitivity of its performance to permutation of the assets, number of assets, size of transaction costs, and (in Appendix \ref{sec:app:maxholding}) maximum holding constraints.  All code is available at \href{https://github.com/saeedmarzban/waveCorr}{https://github.com/saeedmarzban/waveCorr}.

\subsection{Experimental set-up}\label{sec:expSetUp}

\paragraph{Data sets:} We employ three data sets. \textbf{Can-data} includes the daily closing prices of 50 Canadian assets from 01/01/2003 until 01/11/2019 randomly chosen among the 70 companies that were continuously part of the Canadian S\&P/TSX Composite Index during this period. \textbf{US-data} contains 50 randomly picked US assets among the 250 that were part of S\&P500 index during the same period. Finally, \textbf{Covid-data} considered 50 randomly resampled assets from S\&P/TSX Composite Index for period 01/11/2011-01/01/2021 and included open, highest, lowest, and closing daily prices. The Can-data and US-data sets were partitioned into training, validation, and test sets according to the periods 2003-2009/2010-2013/2014-2019 and 2003-2009/2010-2012/2013-2019 respectively, while the Covid-data  was only divided in a training (2012-2018) and testing (2019-2020) periods given that hyper-parameters were reused from the previous two studies. We assume with all datasets a constant comission rate of $c_s=c_p=0.05\%$ in the comparative study, while the sensitivity analysis considers no transaction costs unless specified otherwise.

\paragraph{Benchmarks:} In our main study, we compare the performance of WaveCorr to CS-PPN \citep{zhang2020cost}, EIIE \citep{jiang2017deep}, and the equal weighted portfolio (EW). Note that both CS-PPN and EIIE were adapted to optimize the Sharpe-ratio objective described in section \ref{sec:RL} that exactly accounts for transaction costs.

\paragraph{Hyper-parameter selection:} 
Based on a preliminary unreported investigation, where we explored the influence of different optimizers (namely ADAM, SGD, RMSProp, and SGD with momentum), we concluded that ADAM had the fastest convergence. We also narrowed down a list of reasonable values (see Table \ref{table:hyperparams}) for the following common hyper-parameters: initial learning rate, decay rate, minimum rate, look-back window size $h$, planning horizon $T$. 
For each method, the final choice of hyper-parameter settings was done based on the average annual return achieved on both a 4-fold cross-validation study using Can-data and a 3-fold study with the US-data. The final selection (see Table \ref{tbl:hyperparams}) favored, for each method, a candidate that appeared in the top 5 best performing settings of both data-sets in order to encourage generalization power among similarly performing candidates. Note that in order to decide on the number of epochs, an early stopping criteria was systematically employed.

\paragraph{Metrics:} We evaluate all approaches using out-of-sample data (\quoteIt{test data}). \quoteIt{Annual return} denotes the annualized rate of return for the accumulated portfolio value.  \quoteIt{Annual vol} denotes the prorated standard deviation of daily returns. Trajectory-wise Sharpe ratio (SR) of the log returns,  Maximum drawdown (MDD), i.e. biggest loss from a peak, and average Turnover, i.e. average of the trading volume, are also reported (see \citep{zhang2020cost} for formal definitions). Finaly, we report on the average \quoteIt{daily hit rate} which captures the proportion of days during which the log returns out-performed EW.

\paragraph{Important implementation details:} Exploiting the fact that our SGD step involves exercising the portfolio policy network for $T$ consecutive steps (see equation \eqref{eq:SGD}), a clever implementation was able to reduce WavCorr's training time per episode by a factor of 4.
This was done by replacing the $T$ copies of the portfolio policy network producing $a_0,\,a_2,\,\dots,\,a_{T-1}$, with an equivalent single augmented multi-period portfolio policy network producing all of these actions simultaneously, while making sure that all intermediate calculations are reused as much as possible (see Appendix \ref{sec:app:input} for details). 
We also implement our stochastic gradient descent approach by updating, after each episode $k$, the initial state distribution $F$ to reflect the latest policy $\mu_{\theta_k}$. This is done in order for the final policy to be better adapted to the conditions encountered when the portfolio policy network is applied on a longer horizon than $T$.

\subsection{Comparative Evaluation of WaveCorr}\label{sec:compEval}

In this set of experiments the performances of WaveCorr, CS-PPN, EIIE, and EW are compared for a set of 10 experiments (with random reinitialization of NN parameters) on the three datasets. 
The average and standard deviations of each performance metric are presented in Table \ref{tbl:TCMain} while Figure \ref{fig:mainexperiment} (in the Appendix) presents the average out-of-sample portfolio value trajectories. The main takeaway from the table is that WaveCorr significantly outperforms the three benchmarks on all data sets, achieving an absolute improvement in average yearly returns of 3\% to 25\% compared to the best alternative. It also dominates CS-PPN and EIIE in terms of Sharpe ratio, maximum drawdown, daily hit rate, and turnover. EW does appear to be causing less volatility in the US-data, which leads to a slightly improved SR. Another important observation consists in the variance of these metrics over the 10 experiments. Once again WaveCorr comes out as being generally more reliable than the two other DRL benchmarks in the Can-data, while EIIE appears to be more reliable in the US-data sacrificing average performance. 
Overall, the impressive performance of WaveCorr  seems to support our claim that our new architecture allows for a better identification of the cross-asset dependencies. 
In conditions of market crisis (i.e. the Covid-data), we finally observe that WaveCorr exposes the investors to much lower short term losses, with an MDD of only 31\% compared to more than twice as much for CS-PPN and EIIE, which reflects of a more effective hedging strategy. 


\begin{table}[h]
\caption{The average (and standard deviation) performances using three data sets.}
\label{tbl:TCMain}
\centering
\begin{tabular}{lcccccc}
\toprule
Method &	Annual return   	&	Annual vol      	&	SR	&	MDD    	&	Daily hit rate  	&	Turnover        	\\
\midrule
\multicolumn{7}{c}{Can-data}\\
\midrule
WaveCorr	&	27\%	(3\%)	&	16\%	(1\%)	&	1.73	(0.25) &	16\%	(2\%)&	52\% (1\%)	&	0.32	(0.01)\\

CS-PPN	&	21\% (4\%)	&	19\% (2\%)	&	1.14 (0.34)	&	17\% (4\%)	&	51\% (1\%)	&	0.38 (0.05)	\\

EIIE	&	-1\% (8\%)	&	29\% (4\%)	&	-0.01 (0.28)	&	55\% (9\%)	&	47\% (1\%)	&	0.64 (0.08)	\\

EW	&	\;4\% (0\%) &	14\% (0\%)	&	0.31 (0.00)\;	&	36\% (0\%)	&	-	&	0.00 (0.00)	\\
\midrule
\multicolumn{7}{c}{US-data}\\
\midrule
WaveCorr	&	19\% (2\%)	&	16\% (2\%)	&	1.17 (0.20)	&	20\% (4\%)	&	50\% (1\%)	&	0.11 (0.02)	\\

CS-PPN	&	14\% (2\%)	&	15\% (2\%)	&	0.94 (0.17)	&	22\% (6\%)	&	49\% (1\%)	&	0.15 (0.08)	\\

EIIE	&	16\% (1\%)	&	15\% (0\%)	&	1.09 (0.06)	&	20\% (1\%)	&	50\% (0\%)	&	0.17 (0.02)	\\

EW	&	15\% (0\%)	&	13\% (0\%)	&	1.18 (0.00)	&	18\% (0\%)	&	-	&	0.00 (0.00)	\\
\midrule
\multicolumn{7}{c}{Covid-data}\\
\midrule
WaveCorr	&	56\% (13\%)	&	26\% (5\%)	&	2.16 (0.50)	&	31\% (9\%)	&	51\% (2\%)	&	0.19 (0.05)	\\

CS-PPN	&	31\% (27\%)	&	51\% (6\%)	&	0.60 (0.48)	&	67\% (7\%)	&	50\% (2\%)	&	0.3 (0.09)	\\

EIIE	&	11\% (30\%)	&	76\% (17\%)	&	0.20 (0.43)	&	77\% (13\%)	&	46\% (2\%)	&	0.76 (0.27)	\\

EW	&	27\% (0\%)	&	29\% (0\%)	&	0.93(0.00)	&	47\% (0\%)	&	-	&	0.01 (0.00)	\\
\bottomrule

\end{tabular}
\end{table}

\removed{
\begin{figure}[h!]
	\centering
	\begin{subfigure}{\mynewwidth\textwidth}
		\includegraphics[scale=\mynewscale]{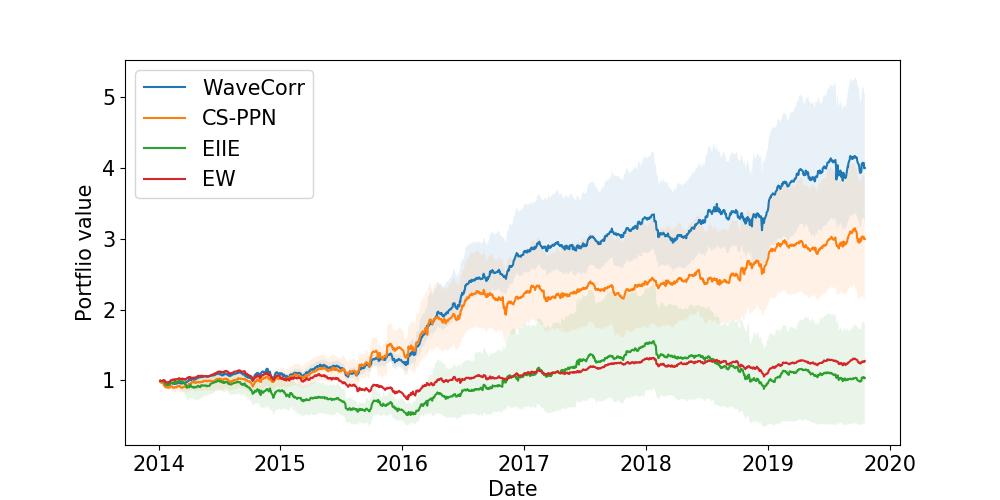}
		\caption{Can-data}
	\end{subfigure}
	\begin{subfigure}{\mynewwidth\textwidth}
		\includegraphics[scale=\mynewscale]{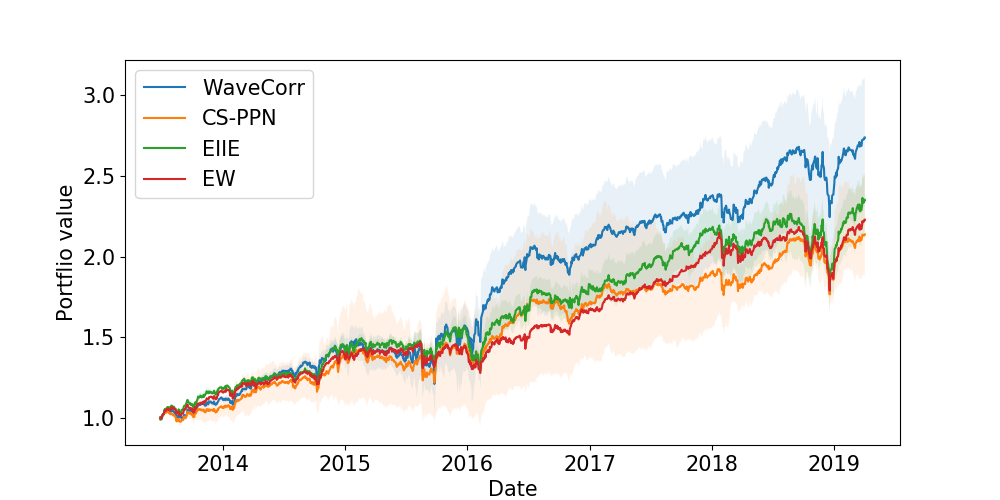}
		\caption{US-data}
	\end{subfigure}
	
 	\begin{subfigure}{\mynewwidth\textwidth}
 		\includegraphics[scale=\mynewscale]{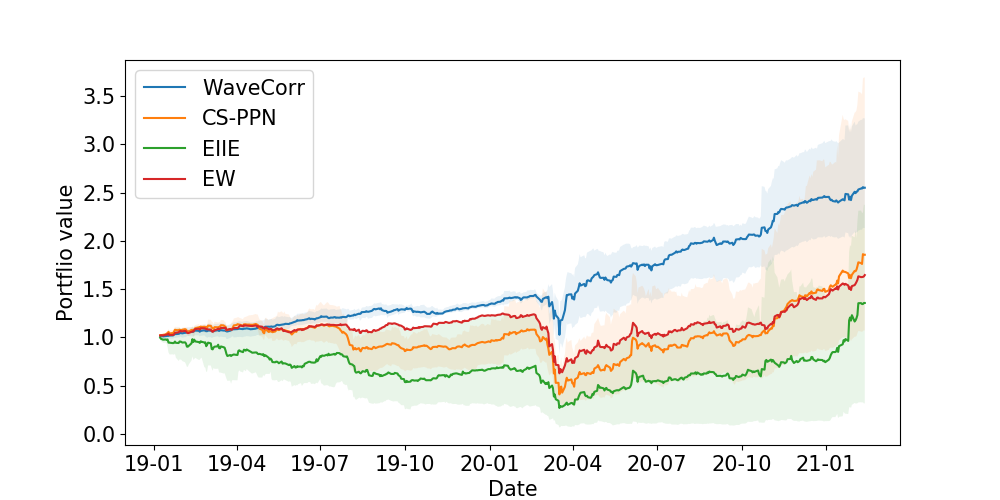}
 		\caption{Covid-data}
 	\end{subfigure}
	
	\caption{Comparison of the wealth accumulated (a.k.a. portfolio value) by WaveCorr, CS-PPN, EIIE, and EW over Can-data, US-data, and Covid-data.
	}
	\label{fig:mainexperiment}
	
\end{figure}}


													
													



\subsection{Sensitivity Analysis}\label{sec:sensitivityAnal}

\paragraph{Sensitivity to permutation of the assets:}

In this set of experiment, we are interested in measuring the effect of asset permutation on the performance of WaveCorr and CS-PPN. Specifically, each experiment now consists in resampling a permutation of the 50 stocks instead of the initial parameters of the neural networks. The results are summarized in Table \ref{tbl:wavecorr_compare_costsens:permute} and illustrated in Figure \ref{fig:wavecorr_compare_costsens_ca_permutation}. We observe that the learning curves and performance of CS-PPN are significantly affected by asset permutation compared to WaveCorr. In particular, one sees that the standard deviation of annual return is reduced by more than a factor of about 5 with WaveCorr. We believe this is entirely attributable to the new structure of the \corrlayer in the portfolio policy network.

\begin{table}[h]
\caption{The average (and standard dev.) performances over random asset permutation in Can-data.}
\label{tbl:wavecorr_compare_costsens:permute}
\centering
\begin{tabular}{lccccccc}
\toprule
&	Annual return   	&	Annual vol      	&	SR	&	MDD    	&	Daily hit rate  	&	Turnover        	\\
\midrule
WaveCorr	&	48\% (1\%)	&	15\% (1\%)	&	3.15 (0.19)	&	14\% (3\%)	&	56\% (0\%)	&	0.48 (0.01)	\\

CS-PPN	&	35\% (5\%)	&	18\% (1\%)	&	2.00 (0.37)	&	22\% (4\%)	&	54\% (1\%)	&	0.54 (0.03)	\\
\bottomrule
\end{tabular}
\end{table}

\begin{figure}[h]
	\centering
\removed{	\begin{subfigure}{\mynewwidth\textwidth}
		\includegraphics[scale=\mynewscale]{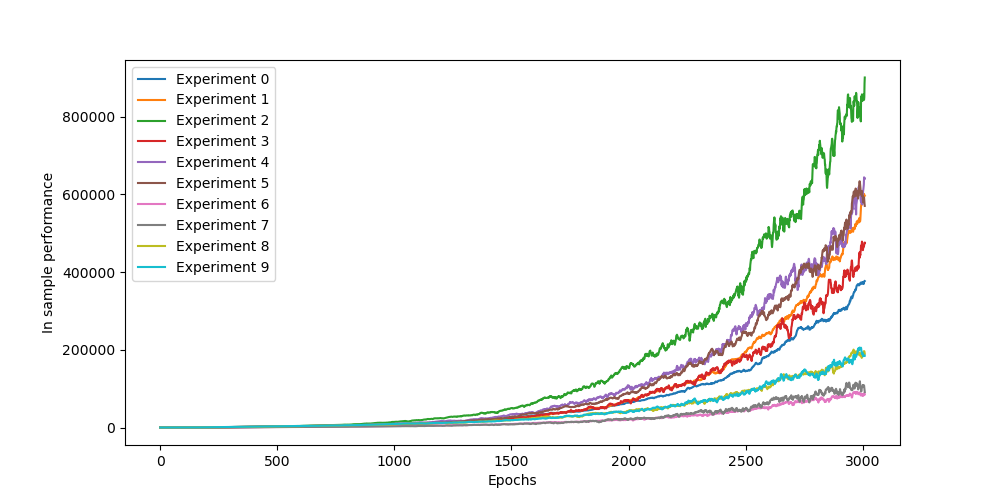}
		\caption{In sample learning curve, CS-PPN}
	\end{subfigure}
	\begin{subfigure}{\mynewwidth\textwidth}
		\includegraphics[scale=\mynewscale]{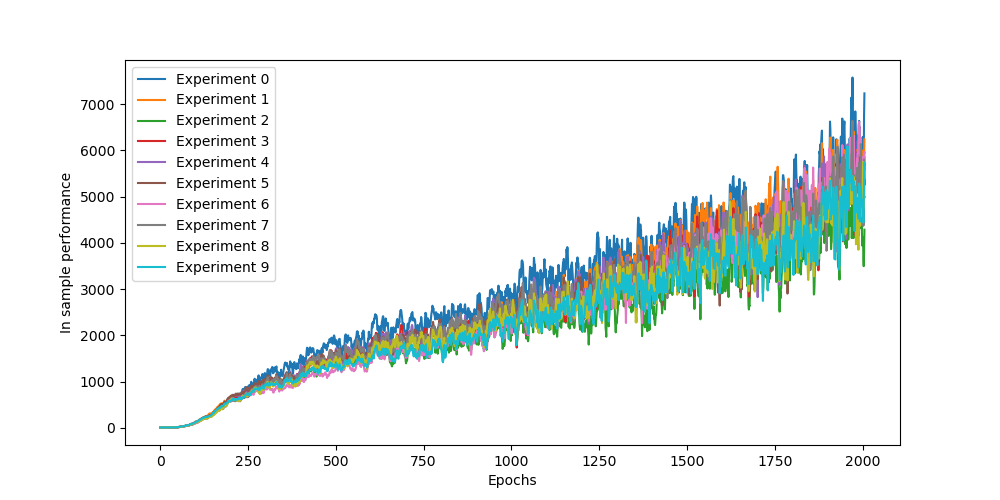}
		\caption{In sample learning curve, WaveCorr}
	\end{subfigure}
	
	\begin{subfigure}{\mynewwidth\textwidth}
		\includegraphics[scale=\mynewscale]{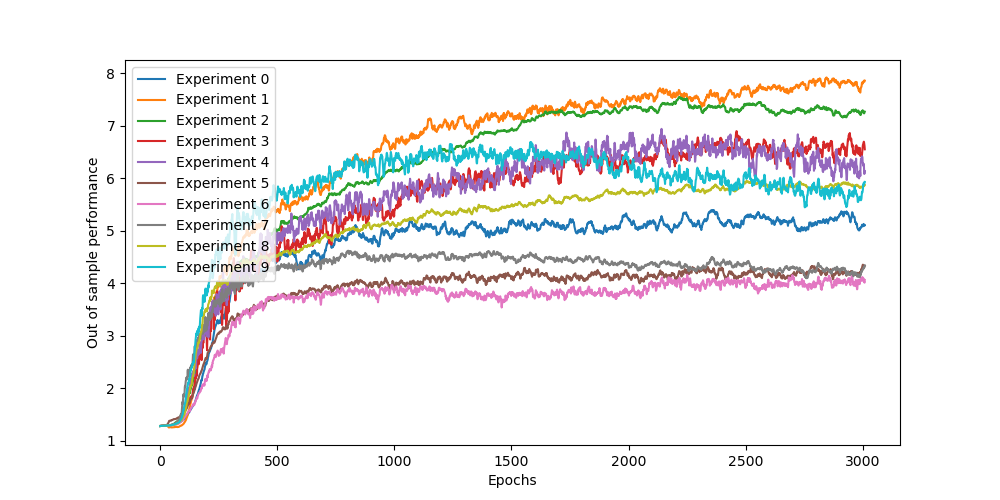}
		\caption{Out of sample learning curve, CS-PPN}
	\end{subfigure}
	\begin{subfigure}{\mynewwidth\textwidth}
		\includegraphics[scale=\mynewscale]{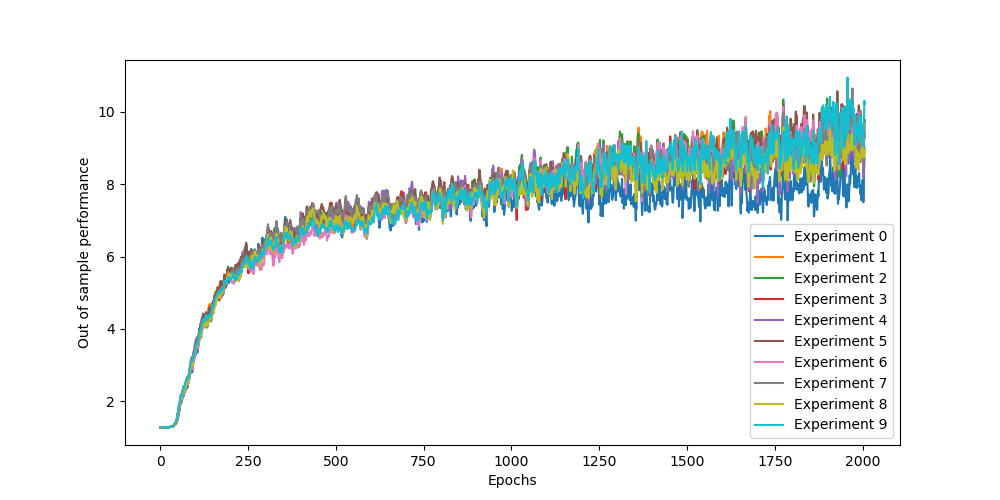}
		\caption{Out of sample learning curve, WaveCorr}
	\end{subfigure}}

	\begin{subfigure}{\mynewwidth\textwidth}
		\includegraphics[scale=\mynewscale]{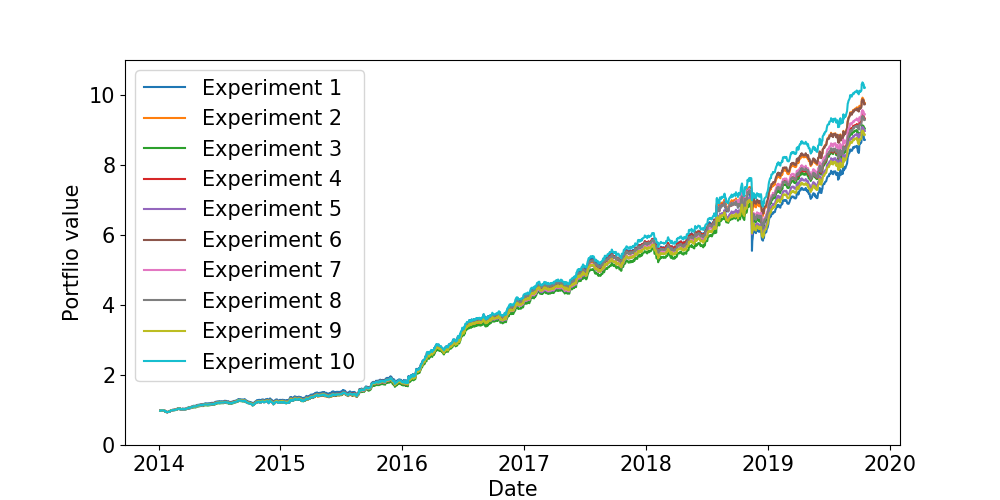}
		\caption{WaveCorr}
	\end{subfigure}
	\begin{subfigure}{\mynewwidth\textwidth}
		\includegraphics[scale=\mynewscale]{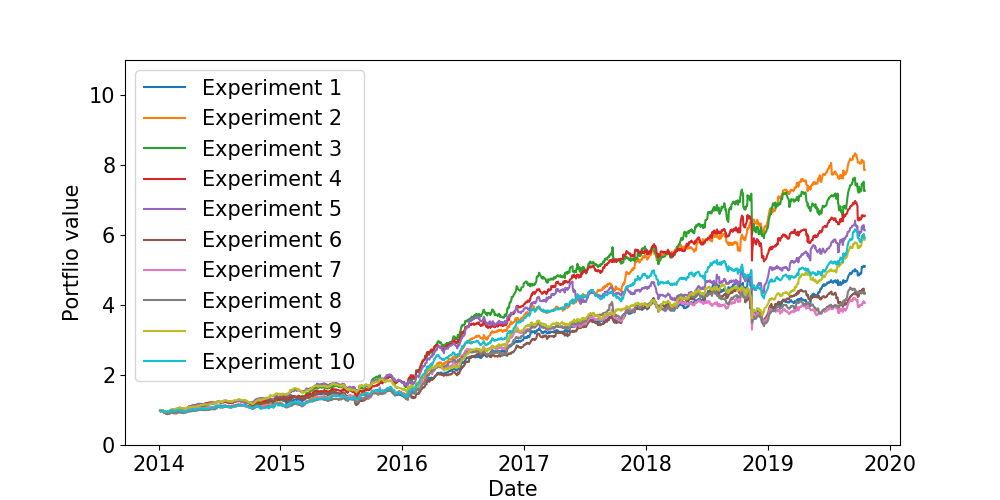}
		\caption{CS-PPN}
	\end{subfigure}
	
	\caption{Comparison of the wealth accumulated by WaveCorr and CS-PPN under random initial permutation of assets on Can-data's test set. 
	}
	\label{fig:wavecorr_compare_costsens_ca_permutation}
	\vspace{-0.2cm}
\end{figure}

	
	
	
	

\paragraph{Sensitivity to number of assets:}

In this set of experiments, we measure the effect of varying the number of assets on the performance of WaveCorr and CS-PPN. We therefore run 10 experiments (randomly resampling initial NN parameters) with growing subsets of 30, 40, and 50 assets from Can-data. Results are summarized in Table \ref{tbl:increaseStocksCA} and illustrated in Figure \ref{fig:wavecorr_compare_costsens_increase} (in Appendix). While having access to more assets should in theory be beneficial for the portfolio performance, we observe that it is not necessarily the case for CS-PPN. On the other hand, as the number of assets increase, a significant improvement, with respect to all metrics, is achieved by WaveCorr. This evidence points to a better use of the correlation information in the data by WaveCorr. 

\begin{table}[h]
	\vspace{-0.2cm}
\caption{The average (and std. dev.) performances as a function of the number of assets in Can-data.}
\label{tbl:increaseStocksCA}
\centering
\begin{tabular}{ccccccc}
\toprule
\# of stocks &	Annual return   	&	Annual vol      	&	SR	&	MDD    	&	Daily hit rate  	&	Turnover        	\\
\midrule
\multicolumn{7}{c}{WaveCorr}\\
\midrule
30	&	37.7\% (4\%)	&	19\% (1\%)	&	2.02 (0.27)	&	22\% (3\%)	&	55\% (1\%)	&	0.39 (0.02)	\\

40	&	38.5\% (4\%)	&	21\% (1\%)	&	1.81 (0.17)	&	23\% (2\%)	&	55\% (1\%)	&	0.44 (0.04)	\\

50	&	43.0\% (5\%)	&	17\% (2\%)	&	2.57 (0.52)	&	20\% (6\%)	&	55\% (1\%)	&	0.43 (0.02)	\\

\midrule
\multicolumn{7}{c}{CS-PPN}\\
\midrule
30	&	30.3\% (3\%)	&	17\% (1\%)	&	1.80 (0.20)	&	21\% (4\%)	&	53\% (1\%)	&	0.42 (0.04)	\\

40	&	29.8\% (7\%)	&	17\% (2\%)	&	1.70 (0.34)	&	22\% (3\%)	&	53\% (1\%)	&	0.41 (0.09)	\\

50	&	32.2\% (4\%)	&	16\% (1\%)	&	2.07 (0.28)	&	18\% (3\%)	&	52\% (1\%)	&	0.43 (0.05)	\\

\bottomrule

\end{tabular}
\end{table}

\paragraph{Sensitivity to commission rate:}
Table \ref{tbl:TC} presents how the performances of WaveCorr and CS-PPN are affected by the magnitude of the commission rate, ranging among 0\%, 0.05\%, and 0.1\%. One can first recognize that the two methods appear to have good control on turnover as the commission rate is increased. Nevertheless, one can confirm from this table the significantly superior performance of WaveCorr prevails under all level of commission rate. 

\begin{table}[h]
	\vspace{-0.2cm}
\caption{The average (and std. dev.) performances as a function of commission rate (CR) in Can-data.}
\label{tbl:TC}
\centering
\begin{tabular}{lcccccc}
\toprule
Method&	Annual return   	&	Annual vol      	&	SR	&	MDD    	&	Daily hit rate  	&	Turnover        	\\
\midrule
\multicolumn{7}{c}{CR = 0}\\
\midrule

WaveCorr	&	42\% (3\%)	&	15\% (0\%)	&	2.77 (0.20)	&	13\% (1\%)	&	55\% (1\%)	&	0.44 (0.02)	\\

CS-PPN	&	35\% (4\%)	&	17\% (1\%)	&	2.04 (0.27)	&	14\% (3\%)	&	53\% (1\%)	&	0.47 (0.05)	\\

\midrule
\multicolumn{7}{c}{CR = 0.05\%}\\
\midrule

WaveCorr	&	27\% (3\%)	&	16\% (1\%)	&	1.73 (0.25)	&	15\% (2\%)	&	52\% (1\%)	&	0.32 (0.01)	\\

CS-PPN	&	21\% (4\%)	&	19\% (2\%)	&	1.14 (0.34)	&	17\% (4\%)	&	51\% (1\%)	&	0.38 (0.05)	\\

\midrule
\multicolumn{7}{c}{CR = 0.1\%}\\
\midrule

WaveCorr	&	19\% (2\%)	&	15\% (1\%)	&	1.34 (0.16)	&	16\% (2\%)	&	50\% (1\%)	&	0.22 (0.01)	\\
CS-PPN	&	14\% (7\%)	&	17\% (3\%)	&	0.92 (0.50)	&	19\% (8\%)	&	50\% (1\%)	&	0.22 (0.09)	\\

\bottomrule

\end{tabular}
\end{table}

\section{Conclusion}\label{sec:Conclusion}

This paper presented a new architecture for portfolio management that is built upon WaveNet  \citep{oord2016wavenet}, which uses dilated causal convolutions at its core, and a new design of correlation block that can process and extract cross-asset information. We showed that, despite being parsimoniously parameterized, WaveCorr can satisfy the property of asset permutation invariance,   whereas a naive extension of CNN, such as in the recent works of \cite{zhang2020cost}, does not. 
The API property is both appealing from a practical, given that it implies that the investor does not need to worry about how he/she indexes the different assets, and empirical point of views, given the empirical evidence that it leads to improved stability of the network's performance. As a side product of our analysis, the results presented in Appendix \ref{sec:app:proofSecArch} lay important foundations for analysing the API property in a larger range of network architectures.
In the numerical section, we tested the performance of WaveCorr using data from both Canadian (TSX) and American (S\&P 500) stock markets. The experiments demonstrate that WaveCorr consistently outperforms our benchmarks under a number of variations of the model: including the number of available assets, the size of transaction costs, etc.


\section*{Acknowledgement}
The authors gratefully acknowledge the financial support from the Canadian Natural Sciences and Engineering Research Council [Grants RGPIN-2016-05208 and RGPIN-2014-05602], Mitacs [Grant IT15577], Compute Canada, Evovest, and the NSERC-CREATE Program on Machine Learning in Quantitative Finance and Business Analytics.

\oldText{
\subsection{The relation between in and out of sample performance\EDcomments{This should be measured and discussed in the MAIN study}}

In the third set of experiments, we investigate the correlation between the in and out of sample performance of the models. In supervised learning there is a tight relation between these two levels of performance, as if a model is not overfitting the out of sample performance would not be much different from the in sample. However, when it comes to RL, this relation is not as obvious. It may happen that with significantly different in sample performance levels, we observe similar out of samples. Having in and out of sample results that are close to each other could be beneficial as we can implement the average of a set of our runs with the highest in sample performance instead of averaging over all of them when implementing the policies and obtain higher performance. However, if there is no clear relation beween the two, there is no clue which one of the runs has resulted in better out of sample performance. In this section we provide numerical results to demonstrate that WaveCorr is out performing the cost-sensitve model in providing closer in and out of sample results. To do so, we run 20 experiments by using each model, cost-sensitive and the WaveCorr, and then compute the cross correlation of the in and out of sample results. Our results show a significant positive correlation for WaveCorr and a negative Correlation for the cost-sensitive model. More espicifically, we computed a correlation of 0.73 for the WaveCorr and -0.56 for the cost-sensitive model, which shows the significan improvement the WaveCorr model is providing.

\begin{figure}[h]
	\centering
	\begin{subfigure}{\mynewwidth\textwidth}
		\includegraphics[scale=\mynewscale]{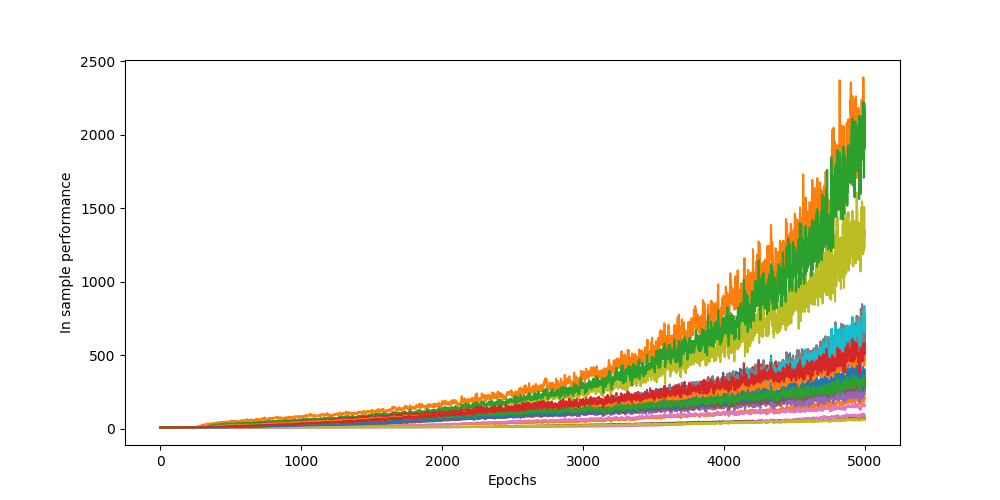}
		\caption{In sample, Cost-sensitive}
	\end{subfigure}
	\begin{subfigure}{\mynewwidth\textwidth}
		\includegraphics[scale=\mynewscale]{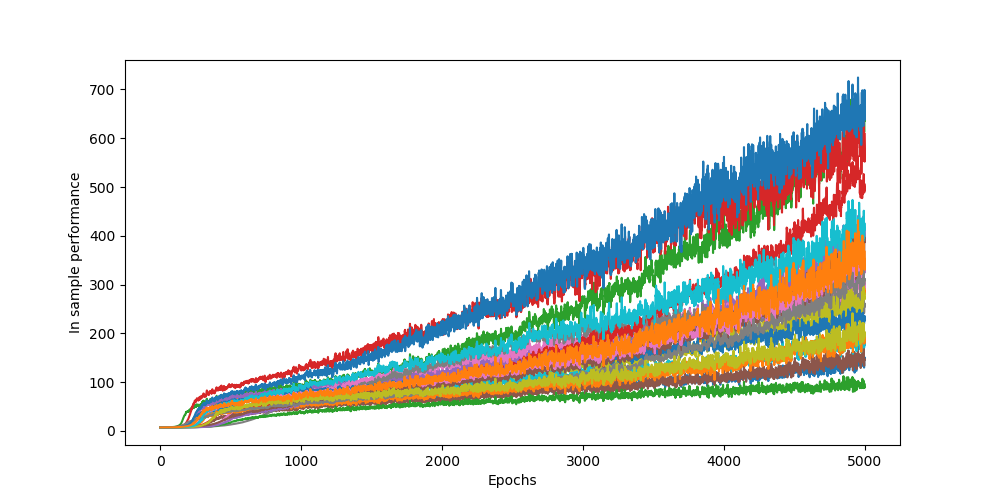}
		\caption{In sample, WaveCorr}
	\end{subfigure}
	
	\begin{subfigure}{\mynewwidth\textwidth}
		\includegraphics[scale=\mynewscale]{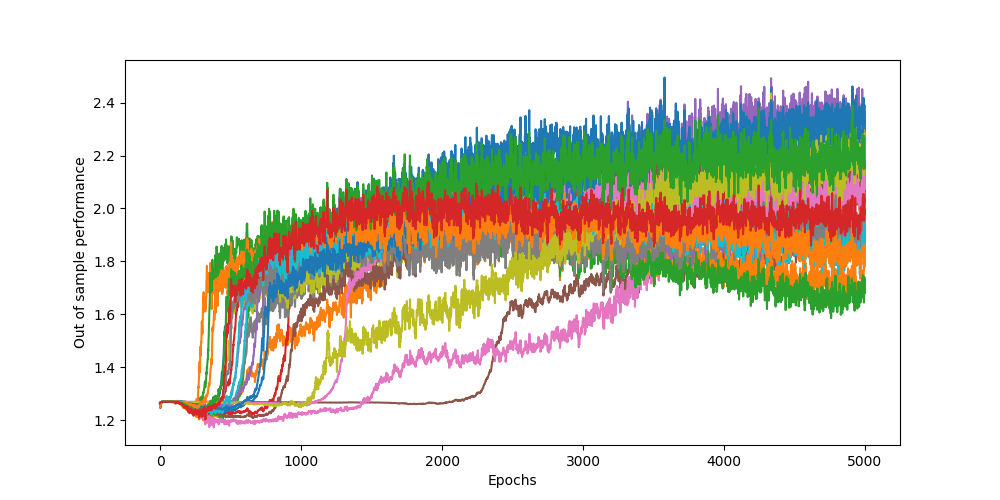}
		\caption{Out of sample, Cost-sensitive}
	\end{subfigure}
	\begin{subfigure}{\mynewwidth\textwidth}
		\includegraphics[scale=\mynewscale]{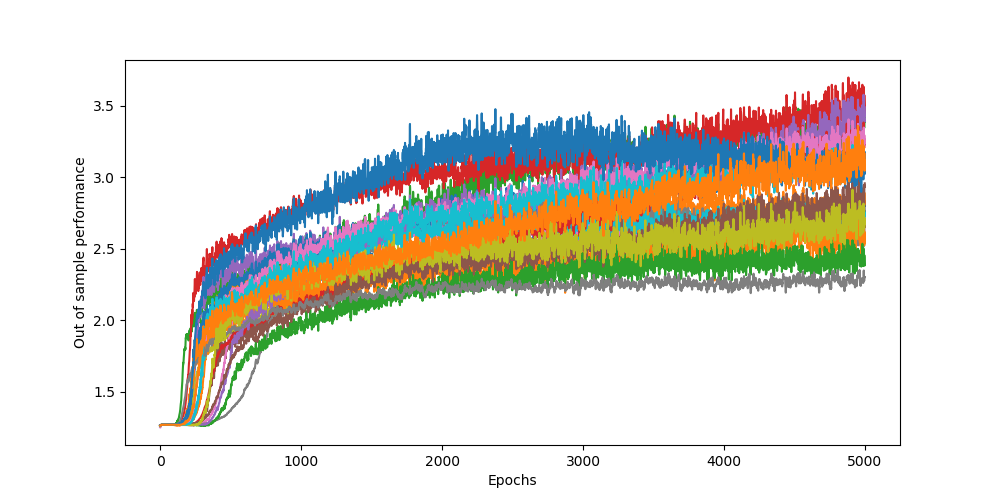}
		\caption{Out of sample, WaveCorr}
	\end{subfigure}
	
	\begin{subfigure}{\mynewwidth\textwidth}
		\includegraphics[scale=\mynewscale]{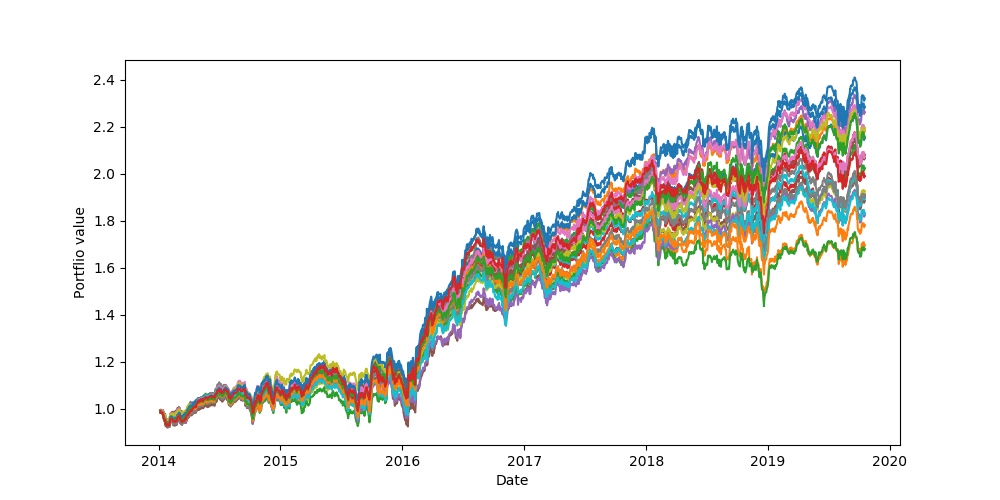}
		\caption{Portfolio value, Cost-sensitive}
	\end{subfigure}
	\begin{subfigure}{\mynewwidth\textwidth}
		\includegraphics[scale=\mynewscale]{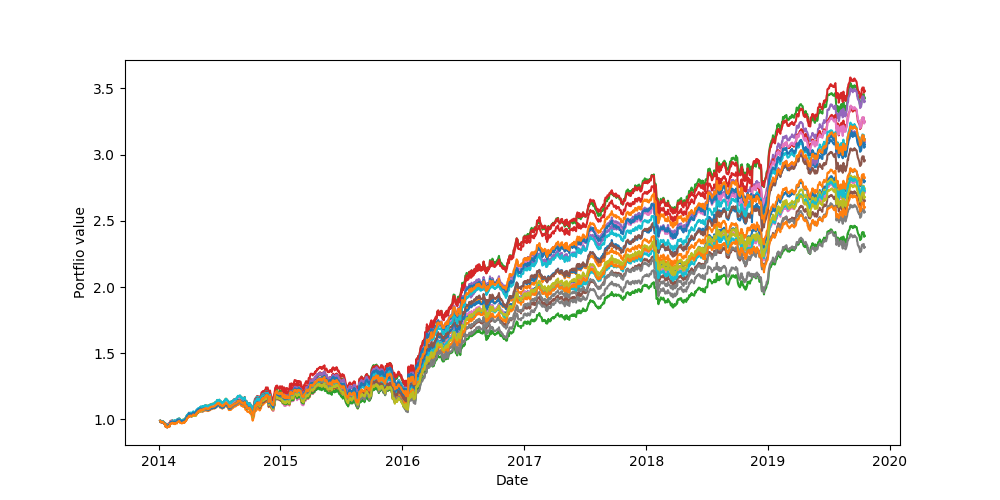}
		\caption{Portfolio value, WaveCorr}
	\end{subfigure}
	
	\caption{Comparing the correlation between the in and out of sample performance of WaveCorr and Cost-senstive models}
	
\end{figure}
}

\oldText{
\subsection{COVID-19 data}
During the covid-19 time, the market experienced significant drawdowns. In this section we test our model compared to the Cost-sensitive model by testing them during Jan-2019 and Jan-2021. The results are shown in Figure \ref{fig:covid} and Table \ref{tbl:covid}.}

\appendix
\section{Appendix}

\renewcommand\thesection{\Alph{section}}
\renewcommand{\theequation}{A.\arabic{equation}}
\renewcommand{\thefigure}{A.\arabic{figure}}
\renewcommand{\thetable}{A.\arabic{table}}

This appendix is organized as follows. Section \ref{app:bissect} demonstrates a claim made in section \ref{sec:PM_problem} regarding the fact that the solution of $\nu_t = f(\nu_t,\PW_t',\PW_t)$ can be obtained using a bisection method. Section \ref{sec:app:proofSecArch} presents proofs to the two propositions in section \ref{sec:architecture}. Section \ref{sec:app:CSPPN} presents further details on the correlation layer in \cite{zhang2020cost} and its two deficiencies.  Section \ref{sec:app:input} presents further details on the augmented policy network architecture used to accelerate training. Section \ref{sec:app:hyperParams} presents our hyper-parameter ranges and final selection. Finally, section \ref{sec:app:addRes} presents a set of additional results.
    
\subsection{Solving $\nu = f(\nu)$}\label{app:bissect}

In order to apply the bisection method to solve $\nu=f(\nu)$, we will make use of the following proposition.
\begin{proposition}\label{thm:bissectThm}
For any $0<c_s<1$ and $0<c_p<1$, the function $g(\nu):=\nu-f(\nu)$ is strictly increasing on $[0,\,1]$ with $g(0)<0$ and $g(1)>0$.
\end{proposition}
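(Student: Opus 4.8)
The plan is to work directly from the closed form of $f$ given in Section \ref{sec:PM_problem}, writing (and dropping the time index $t$ for readability)
\[ g(\nu) \;=\; \nu - f(\nu,\PW',\PW) \;=\; \nu - 1 + c_s\sum_{i=1}^m (\pw'^i - \nu\pw^i)^+ + c_p\sum_{i=1}^m (\nu\pw^i - \pw'^i)^+ , \]
and to exploit two elementary facts about the weight vectors involved: $\PW,\PW'\in\mathbb{W}$ are componentwise nonnegative, and $\sum_i \pw^i = \sum_i \pw'^i = 1$. The first is immediate from the definition of $\mathbb{W}$; the second holds for $\PW'$ as well because $\sum_i \pw'^i_t = (p_{t-1}/p'_t)\,\Areturns_t^\top \PW_{t-1} = 1$ by the portfolio dynamics.

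For the two boundary values I would simply substitute. At $\nu = 0$, nonnegativity of $\pw'^i$ gives $(\pw'^i)^+ = \pw'^i$ and $(-\pw'^i)^+ = 0$, so $g(0) = -1 + c_s\sum_i \pw'^i = c_s - 1 < 0$ since $c_s < 1$. At $\nu = 1$, the linear term vanishes and $g(1) = c_s\sum_i (\pw'^i - \pw^i)^+ + c_p\sum_i (\pw^i - \pw'^i)^+ \ge 0$, with strict inequality as soon as $\PW \neq \PW'$; in the degenerate case $\PW = \PW'$ there is no rebalancing and $\nu_t = 1$ solves the fixed point trivially, so we may assume a nontrivial trade and conclude $g(1) > 0$.

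For strict monotonicity I would avoid differentiating, since the summands are only piecewise affine, and argue directly. Fix $0 \le \nu_1 < \nu_2 \le 1$. Each term $(\nu\pw^i - \pw'^i)^+$ is nondecreasing in $\nu$ (composition of a nondecreasing affine map with $x \mapsto x^+$), so the $c_p$-sum does not decrease when passing from $\nu_1$ to $\nu_2$. For the $c_s$-sum, $x\mapsto x^+$ is $1$-Lipschitz and the argument $\pw'^i - \nu\pw^i$ changes by exactly $-\pw^i(\nu_2-\nu_1)$, so $(\pw'^i - \nu_2\pw^i)^+ - (\pw'^i - \nu_1\pw^i)^+ \ge -\pw^i(\nu_2-\nu_1)$; summing over $i$ and using $\sum_i \pw^i = 1$ bounds the change of the $c_s$-sum below by $-c_s(\nu_2-\nu_1)$. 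Hence
\[ g(\nu_2) - g(\nu_1) \;\ge\; (\nu_2-\nu_1) - c_s(\nu_2-\nu_1) + 0 \;=\; (1-c_s)(\nu_2-\nu_1) \;>\; 0 . \]
Equivalently, one may note that $g$ is continuous and piecewise affine with every slope bounded below by $1 - c_s > 0$.

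The only real subtlety here is the non-smoothness of the positive-part terms, which is what the Lipschitz/monotonicity bound above is designed to circumvent; the boundary evaluations are one-line substitutions. (As an aside, the argument uses only $c_s < 1$ and $c_p \ge 0$, so the hypotheses of the proposition are more than sufficient.) With the proposition in hand, the bisection method applied to $g$ on $[0,1]$ pinpoints the unique root $\nu_t$, justifying the claim made in Section \ref{sec:PM_problem}.
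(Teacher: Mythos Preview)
Your proof is correct. The boundary computations $g(0)=c_s-1<0$ and $g(1)\ge 0$ match the paper's, and you are in fact slightly more careful than the paper in flagging the degenerate case $\PW=\PW'$ (the paper writes $g(1)\ge\min(c_s,c_p)\|\PW'-\PW\|_1>0$ without comment on this case).

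The monotonicity argument, however, takes a genuinely different route. The paper observes that $g$ is convex (as a sum of convex functions), computes an explicit supporting line at $\nu=0$ with slope $1-c_s+c_p\sum_i \1\{\pw'^i=0\}\pw^i>0$, and then invokes convexity to conclude that $g$ is strictly increasing on $[0,\infty)$. You instead bound the increment directly: the $c_p$-sum is monotone in $\nu$, and the $1$-Lipschitz property of $x\mapsto x^+$ caps the drop of the $c_s$-sum by $c_s(\nu_2-\nu_1)$, yielding the uniform slope bound $g(\nu_2)-g(\nu_1)\ge(1-c_s)(\nu_2-\nu_1)$. Your argument is more elementary (no appeal to convexity or subdifferentials) and delivers a global lower bound on the slope in one stroke; the paper's approach has the minor advantage of making the convex structure of $g$ explicit, which could be useful if one later wanted sharper quantitative control near $\nu=0$.
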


\begin{proof}
Recalling that $f(\nu,\PW',\PW):=1- c_s \sum_{i=1}^{m} (\pw'^{i} - \nu \pw^i)^+ - c_p \sum_{i=1}^{m} (\nu \pw^i - \pw'^{i})^+$, we first obtain the two bounds at $g(0)$ and $g(1)$ as follows:
\[g(0)=0-\left(1- c_s \sum_{i=1}^{m} (\pw'^{i})^+ - c_p \sum_{i=1}^{m} (- \pw'^{i})^+\right)=0-1+c_s<0\,,\]
since $c_s<1$, and
\[g(1)=1-\left(1- c_s \sum_{i=1}^{m} (\pw'^{i} - \pw^i)^+ - c_p \sum_{i=1}^{m} (\pw^i - \pw'^{i})^+\right)\geq \min(c_s,c_p)\|\PW'^{i} - \PW^i\|_1>0\,,\]
since $\min(c_s,c_p)>0$.
We can further establish the convexity of $g(\nu)$, given that it is the sum of convex functions. A careful analysis reveals that $g(\nu)$ is supported at $0$ by the plane
\[g(\nu)\geq g(0)+\nu\left(1-c_s+c_p\sum_{i=1}^m \1\{\pw^{\prime i}=0\}\pw^i\right)\,,\]
where $\1\{A\}$ is the indicator function that returns $1$ if $A$ is true, and $0$ otherwise.
Hence, by convexity of $g(\nu)$, the fact that this supporting plane is strictly increasing implies that $g(\nu)$ is strictly increasing for all $\nu\geq 0$.
\end{proof}

Given Proposition \ref{thm:bissectThm}, we can conclude that a bisection method can be used to find the root of $g(\nu)$, which effectively solves $\nu=f(\nu)$.


\subsection{Proofs of Section \ref{sec:architecture}}\label{sec:app:proofSecArch}

\EDmodified{We start this section with a lemma that will simplify some of our later derivations.
\begin{lemma}\label{thm:app:subsetAPI}
A block capturing a set of functions $\mathcal{B}\subseteq \{B:\mathbb{R}^{m \times h \times d}\rightarrow \mathbb{R}^{m \times h' \times d'}\}$ is asset permutation invariant if and only if given any permutation operator $\sigma$, we have that $\{\sigma^{-1}\circcomp B\circcomp \sigma: B\in\mathcal{B}\} \supseteq \mathcal{B}$.
\end{lemma}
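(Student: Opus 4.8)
The plan is to handle the two directions separately. The ``only if'' direction is immediate: asset permutation invariance is defined by the equality $\{\sigma^{-1}\circcomp B\circcomp\sigma:B\in\mathcal{B}\}=\mathcal{B}$ for every permutation operator $\sigma$, which in particular entails the inclusion $\{\sigma^{-1}\circcomp B\circcomp\sigma:B\in\mathcal{B}\}\supseteq\mathcal{B}$, so there is nothing to prove.

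For the ``if'' direction, I would first isolate an elementary algebraic fact. For a permutation $\pi$ and its inverse $\pi^{-1}$, write $\Phi_\sigma$ for the relabelling map $B\mapsto\sigma^{-1}\circcomp B\circcomp\sigma$ on functions $\mathbb{R}^{m\times h\times d}\to\mathbb{R}^{m\times h'\times d'}$. Since $\sigma^{-1}$ is itself a permutation operator (the one induced by $\pi^{-1}$), and since $\sigma\circcomp\sigma^{-1}$ and $\sigma^{-1}\circcomp\sigma$ act as the identity on the relevant tensor spaces, a one-line computation shows $\Phi_{\sigma^{-1}}\circcomp\Phi_\sigma=\Phi_\sigma\circcomp\Phi_{\sigma^{-1}}=\mathrm{id}$; hence $\Phi_\sigma$ is a bijection on the space of all such functions, with inverse $\Phi_{\sigma^{-1}}$. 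Applied set-wise it is in particular monotone for set inclusion.

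Now assume the hypothesis, i.e. $\Phi_\sigma(\mathcal{B})\supseteq\mathcal{B}$ for every permutation operator $\sigma$. Instantiating this at $\sigma^{-1}$ gives $\Phi_{\sigma^{-1}}(\mathcal{B})\supseteq\mathcal{B}$; applying $\Phi_\sigma$ to both sides and using $\Phi_\sigma\circcomp\Phi_{\sigma^{-1}}=\mathrm{id}$ yields $\mathcal{B}\supseteq\Phi_\sigma(\mathcal{B})$. Together with the hypothesis this gives $\Phi_\sigma(\mathcal{B})=\mathcal{B}$ for all $\sigma$, which is exactly asset permutation invariance. The only place that calls for any care is the bookkeeping between the ``input-side'' operator on $\mathbb{R}^{m\times h\times d}$ and the ``output-side'' operator on $\mathbb{R}^{m\times h'\times d'}$ — both denoted $\sigma$ but acting on tensors with different channel counts — yet since both are induced by the same bijection $\pi$, the cancellation identities used above are unaffected. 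I expect no other obstacles: the content of the lemma is essentially the observation that a set required only to be mapped \emph{into} itself by every member of a group of bijections is in fact \emph{invariant}, precisely because inverses are available in the group.
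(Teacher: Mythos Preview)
Your proposal is correct and follows essentially the same route as the paper: both directions are handled as you describe, and the ``if'' direction hinges on instantiating the hypothesis at $\sigma^{-1}$ and then cancelling via $\Phi_\sigma\circcomp\Phi_{\sigma^{-1}}=\mathrm{id}$, which is exactly what the paper does (written there as a chain of set inclusions rather than via the named map $\Phi_\sigma$). Your explicit remark about the input-side versus output-side operators is a small improvement over the paper, which sidesteps the issue by assuming $h=h'$ and $d=d'$ ``for simplicity of exposition.''
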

\begin{proof}
The \quoteIt{only if} follows straightforwardly from the fact that equality between two sets implies that each set is a subset of the other.

Regarding the \quoteIt{if} part, we start with the assumption that 
\[\forall \sigma, \{\sigma^{-1}\circcomp B\circcomp \sigma: B\in\mathcal{B}\} \supseteq \mathcal{B}\,.\]
Next, we follow with the fact that for all permutation operator $\sigma$:
\begin{align*}
\{\sigma^{-1}\circcomp B\circcomp \sigma: B\in\mathcal{B}\}&\subseteq \{\sigma^{-1}\circcomp B\circcomp \sigma: B\in\{\sigma\circcomp B'\circcomp \sigma^{-1}: B'\in\mathcal{B}\}\}\\
&=\{\sigma^{-1}\circcomp \sigma\circcomp B'\circcomp \sigma^{-1}\circcomp \sigma: B'\in\mathcal{B}\}=\mathcal{B}\,,
\end{align*}
where we assumed for simplicity of exposition that $h=h'$ and $d=d'$, and exploited the fact that $\sigma^{-1}$ is also a permutation operator.
\end{proof}}

\subsubsection{Proof of Proposition \ref{thm:corrlayer}}

We first clarify that the correlation layer is associated with the following set of functions (see Procedure 1):
\[\mathcal{B}:=\{B_{w,b}:w\in\myRe^{(m+1)\times d},b\in\myRe\}\]
where
$$B_{w,b}(\mathcal{T})[i,:,1]:=\left( {\cal T}[i,:,:]\circprod  (\vec{1} w_0^\top) + \sum_{j=1}^m {\cal T}[j,:,:] \circprod  (\vec{1} w_j^\top)\right) \vec{1}  + b, \;\; \forall\,i=1,...,m,$$
with $\circprod$ denoting the Hadamard (element-wise) product. 

Let $\sigma$ (associated with the bijection $\pi$) be an asset permutation operator. For any correlation layer function $B_{w,b}\in\mathcal{B}$, one can construct a new set of parameters $w_0':=w_0$, $w_j':=w_{\pi(j)}$, for all $j=1,\dots,m$, and $b':=b$ such that for all input tensor $\mathcal{T}$, we have that for all $i$:
\begin{align}
    B_{w',b'}(\sigma(\mathcal{T}))[i,:,1]&= \left(\sigma({\cal T})[i,:,:]\circprod(\vec{1}w_{0}^{\top})+\sum_{j=1}^{m}\sigma({\cal T})[j,:,:]\circprod(\vec{1}w_{\pi(j)}^{\top})\right)\vec{1} + b\\
&=  \left({\cal T}[\pi(i),:,:]\circprod(\vec{1}w_{0}^{\top})+\sum_{j=1}^{m}{\cal T}[\pi(j),:,:]\circprod(\vec{1}w_{\pi(j)}^{\top})\right)\vec{1} + b\\
&=  \left({\cal T}[\pi(i),:,:]\circprod(\vec{1}w_{0}^{\top})+\sum_{j'=1}^{m}{\cal T}[j',:,:]\circprod(\vec{1}w_{j'}^{\top})\right)\vec{1} + b.
\end{align}
Hence,
\[\sigma^{-1}(B_{w',b'}(\sigma(\mathcal{T})))[i,:,1]=\left({\cal T}[i,:,:]\circprod(\vec{1}w_{0}^{\top})+\sum_{j=1}^{m}{\cal T}[j,:,:]\circprod(\vec{1}w_{j}^{\top})\right)\vec{1} + b= B_{w,b}(\mathcal{T})[i,:,1]\,.\]
\EDmodified{We can therefore conclude that $\{\sigma^{-1}\circcomp B\circcomp \sigma: B\in\mathcal{B}\} \supseteq \mathcal{B}$. Based on Lemma \ref{thm:app:subsetAPI}, we conclude that $\mathcal{B}$ is asset permutation invariant.}

\subsubsection{Proof of Proposition \ref{thm:permInv}}\label{sec:app:proofPI}
To prove Proposition \ref{thm:permInv}, we demonstrate that all blocks used in the WaveCorr architecture are asset permutation invariant (Steps 1 to 3). We then show that asset permutation invariance is preserved under composition (Step 4). Finally, we can conclude in Step 5 that WaveCorr is asset permutation invariant.

\paragraph{Step 1 - Dilated convolution, Causal convolution, Sum, and $1 \times 1$ convolution are asset permutation invariant:}
The functional class of a dilated convolution, a causal convolution, a sum, and a $1 \times 1$ convolution block all have the form:
\[\mathcal{B}:=\{B_g: g\in\mathcal{G}\},\]
where
\[B_g(\mathcal{T})[i,:,:]:=g(\mathcal{T}(i,:,:)), \;\; \forall\,i=1,...,m,\]
for some set of functions $\mathcal{G}\subseteq \{G:\myRe^{1\times h\times d}\rightarrow\myRe^{1\times h\times d'}\}$. In particular, in the case of dilated, causal, and $1\times 1$ convolutions, this property follows from the use of $1\times 3$, $1\times[h-28]$, and $1\times 1$ kernels respectively. Hence, for any $g\in\mathcal{G}$, we have that:
\[\sigma^{-1}(B_g(\sigma(\mathcal{T})))=\sigma^{-1}(\sigma(B_g(\mathcal{T})))=B_g(\mathcal{T})\,,\]
\EDmodified{which implies that $\{\sigma^{-1}\circcomp B\circcomp\sigma: B\in\mathcal{B}\} = \mathcal{B}$.}

\paragraph{Step 2 - Relu and dropout are asset permutation invariant:}
We first clarify that Relu and dropout on a tensor in $\myRe^{m\times h\times d}$ are singleton sets of functions:
\[\mathcal{B}:=\{B_g\}\]
where $g:\myRe\rightarrow\myRe$ and $B_g(\mathcal{T})[i,j,k]:=g(\mathcal{T}[i,j,k])$. In particular, in the case of Relu, we have:
$$B_g(\mathcal{T})[i,j,k]:=\max(0,\mathcal{T}[i,j,k])\,,$$
while, for dropout we have: 
$$B_g(\mathcal{T})[i,j,k]:=\mathcal{T}[i,j,k]\,,$$
since a dropout block acts as a feed through operator. 
Hence, we naturally have that:
\[\sigma^{-1}(B_g(\sigma(\mathcal{T})))=\sigma^{-1}(\sigma(B_g(\mathcal{T})))=B_g(\mathcal{T})\,,\]
\EDmodified{which again implies that $\{\sigma^{-1}\circcomp B\circcomp\sigma: B\in\mathcal{B}\} = \mathcal{B}$.}

\removed{
\paragraph{Step 3 - Correlation layer is asset permutation invariant:} We first clarify that the correlation layer is associated with the following set of functions (see Procedure 1):
\[\mathcal{B}:=\{B_{w,b}:w\in\myRe^{[m+1]\times d},b\in\myRe\}\]
where
$$B_{w,b}(\mathcal{T})[i,:,1]:=\left( {\cal T}[i,:,:]\circprod  (\vec{1} w_0^\top) + \sum_{j=1}^m {\cal T}[j,:,:] \circprod  (\vec{1} w_j^\top) + b \right) \vec{1}, \;\; \forall\,i=1,...,m,$$
with $\circprod$ denoting the Hadamard (element-wise) product. 

Let $\sigma$ (associated with the bijection $\pi$) be an asset permutation operator. For any correlation layer function $B_{w,b}\in\mathcal{B}$, one can obtain a new set of parameters $w_0'=w_0$, $w_j'=w_{\pi(j)}$ and $b'=b$ such that for all input tensor $\mathcal{T}$, we have that for all $i$:
\begin{align}
    B_{w',b'}(\sigma(\mathcal{T}))[i,:,1]&= \left(\sigma({\cal T})[i,:,:]\circprod\vec{1}w_{0}^{\top}+\sum_{j=1}^{m}\sigma({\cal T})[j,:,:]\circprod\vec{1}w_{\pi(j)}^{\top}\right)\vec{1} + b\\
&=  \left({\cal T}[\pi(i),:,:]\circprod\vec{1}w_{0}^{\top}+\sum_{j=1}^{m}{\cal T}[\pi(j),:,:]\circprod\vec{1}w_{\pi(j)}^{\top}\right)\vec{1} + b\\
&=  \left({\cal T}[\pi(i),:,:]\circprod\vec{1}w_{0}^{\top}+\sum_{j'=1}^{m}{\cal T}[j',:,:]\circprod\vec{1}w_{j'}^{\top}\right)\vec{1} + b.
\end{align}
Hence,
\[\sigma^{-1}(B_{w',b'}(\sigma(\mathcal{T})))[i,:,1]=\left({\cal T}[i,:,:]\circprod\vec{1}w_{0}^{\top}+\sum_{j=1}^{m}{\cal T}[j,:,:]\circprod\vec{1}w_{j'}^{\top}\right)\vec{1} + b= B_{w,b}(\mathcal{T})[i,:,1]\,.\]
We can therefore conclude that $\{\sigma^{-1}\circcomp B\circcomp \sigma: B\in\mathcal{B}\} = \mathcal{B}$.

\removed{\paragraph{Step XX - Dropout is asset permutation invariant:}
We first clarify that a Dropout block on a tensor in $\myRe^{m\times h\times d}$ acts as a feed through operator:
\[\mathcal{B}:=\{B\}\]
where
$$B(\mathcal{T})[i,j,k]:=\mathcal{T}[i,j,k]\,.$$
Hence, we naturally have that:
\[\sigma^{-1}(B(\sigma(\mathcal{T})))=\sigma^{-1}(\sigma(\mathcal{T}))=B(\mathcal{T})\,.\]}}

\paragraph{Step 3 - Softmax is asset permutation invariant:}

We first clarify that softmax on a vector in $\myRe^{m\times h\times 1}$ is a singleton set of functions:
\[\mathcal{B}:=\{B\}\]
where
$$B(\mathcal{T})[i,j,1]:=\frac{\exp(\mathcal{T}[i,j,1])}{\sum_{i'=1}^m \exp(\mathcal{T}[i',j,1])}\,.$$
Hence, we have that:
\[B(\sigma(\mathcal{T}))[i,j,1]:=\frac{\exp(\mathcal{T}[\pi(i),j,1])}{\sum_{i'=1}^m \exp(\mathcal{T}[\pi(i'),j,1])}=\frac{\exp(\mathcal{T}[\pi(i),j,1])}{\sum_{i'=1}^m \exp(\mathcal{T}[i',j,1])}\,.\]
This allows us to conclude that:
\[\sigma^{-1}(B(\sigma(\mathcal{T})))[i,j,1]=B(\mathcal{T})[i,j,1]\,.\]
\EDmodified{Hence, we conclude that $\{\sigma^{-1}\circcomp B\circcomp \sigma: B\in\mathcal{B}\} = \mathcal{B}$.}

\paragraph{Step 4 - Asset permutation invariance is preserved under composition:}
Given two asset permutation invariant blocks representing the set of functions $\mathcal{B}_1$ and $\mathcal{B}_2$, one can define the composition block as:
\[\mathcal{B}:=\{B_1\circcomp B_2: B_1\in\mathcal{B}_1,B_2\in\mathcal{B}_2\}\,.\]
We have that for all $B_1\in \mathcal{B}_1$ and $B_2\in\mathcal{B}_2$:
\begin{align*}
B &= B_1\circcomp B_2\\
&= (\sigma^{-1}\circcomp B_1'\circcomp \sigma)\circcomp(\sigma^{-1}\circcomp B_2' \circcomp \sigma)\\
&=\sigma^{-1}\circcomp B_1'\circcomp B_2' \circcomp \sigma\\
&=\sigma^{-1}\circcomp B' \circcomp \sigma\,,
\end{align*}
where $B_1'\in\mathcal{B}_1$ and $B_2'\in\mathcal{B}_2$ come from the definition of asset permutation invariance, and where $ B' :=B_1'\circcomp B_2'\in\mathcal{B}$. \EDmodified{We therefore have that $\{\sigma^{-1}\circcomp B\circcomp \sigma: B\in\mathcal{B}\} \supseteq \mathcal{B}$.  Finally, Lemma \ref{thm:app:subsetAPI} allows us to conclude that $\mathcal{B}$ is asset permutation invariant.}

\paragraph{Step 5 - WaveCorr is asset permutation invariant:} 
Combing Step 1 to 4 with Proposition \ref{thm:corrlayer}, we arrive at the conclusion that the architecture presented in Figure \ref{fig:generalArc} is asset permutation invariant since it is composed of a sequence of asset permutation invariant blocks.\qed

\subsection{Correlation Layer in \cite{zhang2020cost} Violates Asset Permutation Invariance}\label{sec:app:CSPPN}

\oldText{
\subsection{Old section}
Figure \ref{fig:corr} presents an example of the correlation layer in \cite{zhang2020cost}. Intuitively, the problem with this structure is that correlation information is only partially extracted by the model. More specifically, by moving the kernel of size 5 in this example, the convolution can only exploit the correlation present in contiguous subsets of the 5 assets. In particular, the first application only considers the first three assets, the second only considers the first four assets, and so on. On the contrary, in our proposed correlation layer, the correlation of an asset with all other assets available for investment is systematically  considered.

\subsection{New section}
}

Assuming for simplicity that $m$ is odd, the \quoteIt{correlational convolution layer} proposed in \cite{zhang2020cost} takes the form of the following set of functions:
\[\mathcal{B}:=\{B_{w,b}:\mathcal{W}\in\myRe^{m\times d\times d},b\in\myRe\}\]
where
$$B_{w,b}(\mathcal{T})[i,j,k]:= \sum_{\ell=1}^m \sum_{k'=1}^d {\cal T}[i-(m+1)/2+\ell,j,k'] \mathcal{W}[\ell,k,k']  + b, \;\; \begin{array}{l}\forall\,i=1,\dots,m\\\forall j=1,\dots,h\\\forall k=1,\dots,d\end{array}\,,$$
where $\mathcal{T}[i',:,:]:=0$ for all $i'\not\in \{1,\dots,m\}$ to represent a zero padding. Figure \ref{fig:corr} presents an example of this layer when $m=5$, $h=1$, and $d=1$. One can already observe in this figure that correlation information is only partially extracted for some of the assets, e.g. the convolution associated to asset one (cf. first row in the figure) disregards the influence of the fifth asset. While this could perhaps be addressed by using a larger kernel, a more important issue arises with this architecture, namely that the block does not satisfy asset permutation invariance.

\begin{figure}[h!]
	\centering
	\includegraphics[scale=.5]{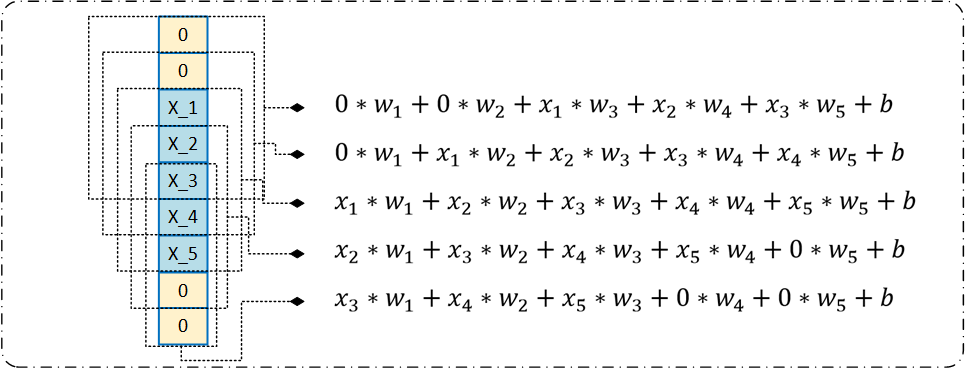}
	\caption{An example of the correlation layer in \cite{zhang2020cost}'s work over 5 assets}
	\label{fig:corr}
\end{figure}

\begin{proposition}
The correlational convolution layer block used in \cite{zhang2020cost} violates asset permutation invariance already when $m=5$, $h=1$, and $d=1$.
\end{proposition}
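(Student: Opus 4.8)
The plan is to specialize Zhang et al.'s correlational convolution layer to $m=5$, $h=1$, $d=1$ and then exhibit a single permutation for which the defining equality of Definition \ref{pi} breaks. With these dimensions, and identifying $\mathbb{R}^{5\times 1\times 1}$ with $\mathbb{R}^5$, the layer is the family $\mathcal{B}=\{B_{\mathcal{W},b}:\mathcal{W}\in\mathbb{R}^{5},\,b\in\mathbb{R}\}$ where, suppressing the trivial second and third tensor indices, $B_{\mathcal{W},b}(\mathcal{T})[i]=\sum_{\ell=1}^{5}\mathcal{W}[\ell]\,\mathcal{T}[i-3+\ell]+b$ for $i=1,\dots,5$, with the zero-padding convention $\mathcal{T}[i']=0$ for $i'\notin\{1,\dots,5\}$. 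Unrolling the first coordinate gives $B_{\mathcal{W},b}(\mathcal{T})[1]=\mathcal{W}[3]\,\mathcal{T}[1]+\mathcal{W}[4]\,\mathcal{T}[2]+\mathcal{W}[5]\,\mathcal{T}[3]+b$. The structural fact I would record is that \emph{for every} $B\in\mathcal{B}$, the first output coordinate depends only on $\mathcal{T}[1],\mathcal{T}[2],\mathcal{T}[3]$ and is completely insensitive to $\mathcal{T}[4]$ and $\mathcal{T}[5]$ — the "partial dependency" asymmetry already visible in Figure \ref{fig:corr}.

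Next I would take $\sigma$ to be the permutation operator associated with the transposition $\pi$ swapping assets $1$ and $5$ (and fixing $2,3,4$), and define $G:=\sigma^{-1}\circcomp B_{\mathcal{W}^{*},0}\circcomp\sigma$, where $\mathcal{W}^{*}$ has $\mathcal{W}^{*}[2]=1$ and all other entries $0$. By construction $G\in\{\sigma^{-1}\circcomp B\circcomp\sigma:B\in\mathcal{B}\}$. A direct computation using $\sigma(\mathcal{T})[i]=\mathcal{T}[\pi(i)]$ and $\sigma^{-1}(\mathcal{O})[i]=\mathcal{O}[\pi^{-1}(i)]$ yields $G(\mathcal{T})[1]=B_{\mathcal{W}^{*},0}(\sigma(\mathcal{T}))[\pi^{-1}(1)]=B_{\mathcal{W}^{*},0}(\sigma(\mathcal{T}))[5]=\sigma(\mathcal{T})[4]=\mathcal{T}[4]$; that is, the first output coordinate of $G$ is exactly $\mathcal{T}[4]$.

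Finally I would show $G\notin\mathcal{B}$ by a two-point evaluation: if $G=B_{\mathcal{W},b}$ for some parameters, evaluating at $\mathcal{T}=0$ forces $b=0$, whereas evaluating at the tensor whose only nonzero entry is a $1$ in coordinate $4$ forces $1=\mathcal{W}[3]\cdot 0+\mathcal{W}[4]\cdot 0+\mathcal{W}[5]\cdot 0+b=b=0$, a contradiction. Hence $G$ lies in $\{\sigma^{-1}\circcomp B\circcomp\sigma:B\in\mathcal{B}\}$ but not in $\mathcal{B}$, so these two sets of functions are distinct; equivalently the subset characterization of Lemma \ref{thm:app:subsetAPI} fails, and therefore $\mathcal{B}$ is not asset permutation invariant.

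I do not expect a genuine obstacle: the argument is elementary once the reduction is written down. The only step demanding care is the bookkeeping — fixing the zero-padding convention and, above all, the direction of the permutation (conjugation by $\sigma$ re-centers the receptive window of output coordinate $1$ around the pre-image asset $\pi^{-1}(1)=5$, which is precisely why that output can now "see" asset $4$ even though the unconjugated layer structurally cannot). As an optional strengthening, not needed for the proposition, one can note by the same computation that the equality fails for \emph{every} $\sigma$ that does not preserve the partition $\{\{1,5\},\{2,4\},\{3\}\}$ of the asset set, by comparing coordinate by coordinate the number of input indices on which each output coordinate depends.
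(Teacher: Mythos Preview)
Your argument is correct and follows the same overall template as the paper's proof: specialize to $m=5$, $h=d=1$, pick one permutation $\sigma$, exhibit a specific $B\in\mathcal{B}$ for which $\sigma^{-1}\circcomp B\circcomp\sigma\notin\mathcal{B}$, and detect the failure by evaluating at a few canonical tensors (the zero tensor and a standard basis tensor).

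The concrete choices differ in an instructive way. The paper uses the transposition of assets $1$ and $2$ and derives a contradiction by forcing a single scalar $w_2'$ to equal two distinct kernel entries $\bar w_1$ and $\bar w_4$; the obstruction there is an \emph{internal coefficient inconsistency}. You instead swap assets $1$ and $5$ and exploit the \emph{receptive-field asymmetry} directly: for every $B\in\mathcal{B}$ the first output coordinate is structurally blind to $\mathcal{T}[4]$, whereas the conjugated map's first output coordinate equals $\mathcal{T}[4]$. Your route makes the connection to the ``partial dependency'' defect of Figure~\ref{fig:corr} explicit and requires only two evaluations (at $0$ and at $e_4$) rather than three; the paper's route has the small advantage that its conclusion does not hinge on the zero-padding boundary and would survive, e.g., circular padding. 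Either argument proves the proposition as stated.
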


\begin{proof}
When $m=5$, $h=1$, and $d=1$, we first clarify that the correlational convolution layer from \cite{zhang2020cost} is associated with the following set of functions:
\[\mathcal{B}:=\{B_{w,b}:w\in\myRe^{5},b\in\myRe\}\]
where
$$B_{w,b}(\mathcal{T})[i]:= \left\{\begin{array}{ll}w_3\mathcal{T}[1]+w_4\mathcal{T}[2]+w_5\mathcal{T}[3]+b&\mbox{if $i=1$}\\
w_2\mathcal{T}[1]+w_3\mathcal{T}[2]+w_4\mathcal{T}[3]+w_5\mathcal{T}[4]+b&\mbox{if $i=2$}\\
w_1\mathcal{T}[1]+w_2\mathcal{T}[2]+w_3\mathcal{T}[3]+w_4\mathcal{T}[4]+w_5\mathcal{T}[5]+b&\mbox{if $i=3$}\\
w_1\mathcal{T}[2]+w_2\mathcal{T}[3]+w_3\mathcal{T}[4]+w_4\mathcal{T}[5]+b&\mbox{if $i=4$}\\
w_1\mathcal{T}[3]+w_2\mathcal{T}[4]+w_3\mathcal{T}[5]+b&\mbox{if $i=5$}
\end{array}\right.,$$
where we shortened the notation $\mathcal{T}[i,1,1]$ to $\mathcal{T}[i]$.
Let's consider the asset permutation operator that inverts the order of the first two assets: $\pi(1)=2$, $\pi(2)=1$, and $\pi(i)=i$ for all $i\geq 3$. We will prove our claim by contradiction. Assuming that $\mathcal{B}$ is asset permutation invariant, it must be that for any fixed values $\bar{w}$ such that $\bar{w}_4\neq \bar{w}_1$, there exists an associated pair of values $(w',b')$ that makes $B_{w',b'}\equiv\sigma^{-1}\circcomp B_{\bar{w},0}\circcomp\sigma$. In particular, the two functions should return the same values for the following three \quoteIt{tensors}: $\mathcal{T}_0[i]:=0$, $\mathcal{T}_1[i]:=\1\{i=1\}$, and at $\mathcal{T}_2[i]:=\1\{i=2\}$. The first implies that $b'=0$ since
\[b'=B_{w',b'}(\mathcal{T}_0)[1]=\sigma^{-1}(B_{\bar{w},0}(\sigma(\mathcal{T}_0)))[1]=0\,.\]
However, it also implies that:
\[w_2'=B_{w',0}(\mathcal{T}_1)[2]=\sigma^{-1}(B_{\bar{w},0}(\sigma(\mathcal{T}_1)))[2]=B_{\bar{w},0}(\mathcal{T}_2)[1]=\bar{w}_4\]
and that 
\[
w_2'=B_{w',0}(\mathcal{T}_2)[3]=\sigma^{-1}(B_{\bar{w},0}(\sigma(\mathcal{T}_2)))[3]=B_{\bar{w},0}(\mathcal{T}_1)[3]=\bar{w}_1\,.\]
We therefore have a contradiction since $\bar{w}_4=w_2'=\bar{w}_1\neq \bar{w}_4$ is impossible. We must therefore conclude that $\mathcal{B}$ was not asset permutation invariant.
\end{proof}

We close this section by noting that this important issue cannot simply be fixed by using a different type of padding, or a larger kernel in the convolution. Regarding the former, our demonstration made no use of how padding is done. For the latter, our proof would still hold given that the fixed parameterization $(\bar{w},0)$ that we used would still identify a member of the set of functions obtained with a larger kernel.


\subsection{Augmented policy network to accelerate training}\label{sec:app:input}

We detail in this section how the structure of the portfolio management problem \eqref{mainprob} can be exploited for a more efficient implementation of a policy network, both in terms of computation time and hardware memory. This applies not only to the implementation of WaveCorr policy network but also policy networks in \cite{jiang2017deep} and \cite{zhang2020cost}. In particular, given a a multiperiod objective as in \eqref{mainprob}, calculating the gradient  
$\nabla_{\theta} SR$ involves the step of generating a sequence of actions $a_0,a_1,...,a_{T-1}$ from a sample trajectory of states $s_0,s_1,...,s_{T-1} \in \mathbb{R}^{m \times h \times d}$ over a planning horizon $T$, where $m:$ the number of assets, $h:$ the size of a lookback window, $d:$ the number of features. 
The common way of implementing this is to create a tensor ${\cal T}$ of dimension $m \times h \times d \times T$ from $s_0,...,s_{T-1}$ and apply a policy network $\mu_{\theta}(s)$ to each state $s_t$ in the 
tensor ${\cal T}$ so as to generate each action $a_t$. Assuming for simplicity of exposition that the state is entirely exogenous, this procedure is demonstrated 
in Figure \ref{fig:fastVSslowConv}(a), where 
a standard causal convolution with $d=1$ and kernel size of 2 is applied. In this procedure, the memory used to store the tensor ${\cal T}$ and the computation time taken to generate all actions $a_0,...,a_{T-1}$ grow linearly in $T$, which become significant for large $T$. It is possible to apply the policy network $\mu_{\theta}(s)$ to generate all the actions $a_0,...,a_{T-1}$ more efficiently than the procedure described in Figure \ref{fig:fastVSslowConv}(a). Namely, in our implementation, we exploit the sequential and overlapping nature of sample states $s_0,...,s_{T-1}$ used to generate the actions $a_0,...,a_{T-1}$, which naturally arises in the consideration of a multiperiod objective. Recall firstly that each sample state $s_t \in \mathbb{R}^{m \times h \times d}$, $t \in \{0,...,T-1\}$, is obtained from a sample trajectory, denoted by ${\cal S}\in \mathbb{R}^{m\times (h+T-1) \times d}$, where
$s_t = {\cal S}[:,t+1:t+h,:]$, $t=0,...,T-1$. Thus, between any $s_t$ and $s_{t+1}$, the last $h-1$ columns in $s_t$ overlap with the first $h-1$ columns in $s_{t+1}$. The fact that there is a significant overlap between any two consecutive states $s_t,s_{t+1}$ hints already that processing each state $s_{t+1}$ separately from $s_t$, as shown in Figure \ref{fig:fastVSslowConv}(a), would invoke a large number of identical calculations in the network as those that were already done in processing $s_t$, which is wasteful and inefficient. To avoid such an issue, we take an augmented approach to apply the policy network. The idea is to use a sample trajectory ${\cal S}$ directly as input to an augmented policy network $\vec{\mu}_{\theta}:\mathbb{R}^{m\times (h+T-1) \times d}\rightarrow\mathbb{R}^{m\times T}$, which reduces to exactly the same architecture as 
the policy network $\mu_{\theta}(s_t)$ when generating only the $t$-th action. Figure \ref{fig:fastVSslowConv}(b) presents this augmented policy network $\vec{\mu}_{\theta}(S)$ for our example, and how it can be applied to a trajectory ${\cal S}$ to generate all actions $a_0,...,a_{T-1}$ at once. One can observe that the use of an augmented policy network allows the intermediate calculations done for each state $s_t$ (for generating an action $a_t$) to be reused by the calculations needed for the other states (and generating other actions). With the exact same architecture as the policy network $\mu_{\theta}(s)$, the augmented policy network $\vec{\mu}_{\theta}(S)$, which takes a trajectory with width $h+T-1$ (thus including $T$ many states), would by design generate $T$ output, each corresponds to an action $a_t$. This not only speeds up the generation of actions $a_0,...,a_{T-1}$ significantly but also requires far less memory to store the input data, i.e. the use of a tensor with dimension $(m \times (h+T) \times d)$ instead of $m\times h\times d\times T$. The only sacrifice that is made with this approach is regarding the type of features that can be integrated. For instance, we cannot include features that are normalized with respect to the most recent history (as done in \cite{jiang2017deep}) given that this breaks the data redundancy between two consecutive time period. Our numerical results however seemed to indicate that such restrictions did not come at a price in terms of performance.

\oldText{
In order to reduce the required hardware (GPU or CPU) memory for storing and processing the input data during training and to speed up the forward pass, which is to calculate the actions for a certain number of sequential days in the planning horizon $T$ and evaluate the SR objective function \eqref{mainprob}, we propose an approach that relies on the sequential order of actions the multiperiod objective. This approach can be employed in our waveCorr policy network as well as our benchmarks \cite{jiang2017deep,zhang2020cost}. In particular, notice that the state space for every two sequential actions are significantly overlapping with only one period difference. We can exploit this and use the calculations that are performed by the model for every action to come up with the next action while avoiding to repeat most part of these calculations. To elaborate more on this, considering $S \in \mathcal{S}$ to be the whole set of training samples, recall how the actions $a_1,...,a_{T-1}$ can be generated by using a $T$ dimensional mini-batch of states $s_1,...,s_T$, where $s_t = S[1:m,t-h:t,1:d]$, as usually done in practice disregarding the sequential pattern of actions and states. To achieve this, we need to assemble an input tensor of dimension $m \times h \times d \times T$, which represents the mini-batch of states and then feed the tensor into the network as shown in Figure \ref{fig:fastVSslowConv}(a). To compute the action at time $t$, $a_t$, the corresponding slice of the input tensor (i.e. a slice of dimension $m \times h \times d$ representing $s_t = S[1:m,t-h:t,1:d]$) should be processed by the network, where a large portion of these operations are already performed when computing $a_{t-1}$ using $s_{t-1} = S[1:m,t-h-1:t-1,1:d]$. We can avoid this redundancy by simply assembling the input tensor to be of a different dimensionality, as shown in Figure \ref{fig:fastVSslowConv}(b). This figure represents a standard causal convolution with $d=1$ and kernel size of $2$. In particular, instead of stacking many states $s_t$, we can consider a general state of dimension $m \times (h+T) \times d$ that will simultaneously provide us with $T$ sequential actions $a_1,...,a_{T-1}$, while making sure that all intermediate calculations are reused as much as possible and every action $a_t$ is computed by considering the corresponding state $s_t=S[1:m,t-h:t,1:d]$. The only difference between these two approaches is the dimension of the input tensors, while the networks are exactly similar. Apart from efficient use of processors in computing the actions, the significant reduction in the dimension of the input tensor improves memory usage during training. The only sacrifice that is made with this approach is regarding the type of features that can be integrated in this approach. For instance, we cannot include features that are normalized with respect to the most recent history (as done in \cite{jiang2017deep}) given that this breaks the data redundancy between two consecutive time period. Our numerical results however seemed to indicate that such restrictions did not come at a price in terms of performance.}


\oldText{
It is possible to reduce the memory usage of the model and speed up the learning process when employing policy networks from \cite{jiang2017deep,zhang2020cost}, or WaveCorr with a multiperiod objective as in \eqref{mainprob}. The classical approach used in training is to instantiate $T$ copies of the policy network that share the same parameters and stack them together to form a \quoteIt{multiple network mini-batch}. The SR objective function, together with a sampled trajectory, can then be used to derive a stochastic gradient step. This is however wasteful for two reasons: 1) it requires to assemble the input data for the whole stack of networks (i.e. $m\times h\times d\times T$); 2) inside each NN in the stack, a large number of computations (e.g. inside convolution layers) end up being exactly repeated. To reduce such wasteful storage and calculations, we perform training on an equivalent single augmented multi-period policy network producing all of the actions $(a_0,\,a_2,\,\dots,\,a_{T-1})$ simultaneously, while making sure that all intermediate calculations are reused as much as possible. Figure \ref{fig:fastVSslowConv} illustrates this procedure for a standard causal convolution network with $d=1$. Figure \ref{fig:fastVSslowConv}(a) presents the classical approach that construct the multiple network mini-batch.
 This approach is wasteful as the only difference between each of the $T$ layers is a single column of data, associated to the most recent time point.
 Figure \ref{fig:fastVSslowConv}(b) presents our suggested augmented structure, which exploits redundancy to compress down the total number of operations and storage needed to produce this $T$ period policy. Already from an input data management point of view, this reduction translates into using a tensor of $(m \times (h+T) \times d)$ instead of $m\times h\times d\times T$.
 The only sacrifice that is made with this approach is regarding the type of features that can be integrated in this approach. For instance, we cannot include features that are normalized with respect to the most recent history (as done in \cite{jiang2017deep}) given that this breaks the data redundancy between two consecutive time period. Our numerical results however seemed to indicate that such restrictions did not come at a price in terms of performance.}

\begin{figure}[h]
	\centering
	\begin{subfigure}{\mynewwidth\textwidth}
	
	\includegraphics[scale=.15]{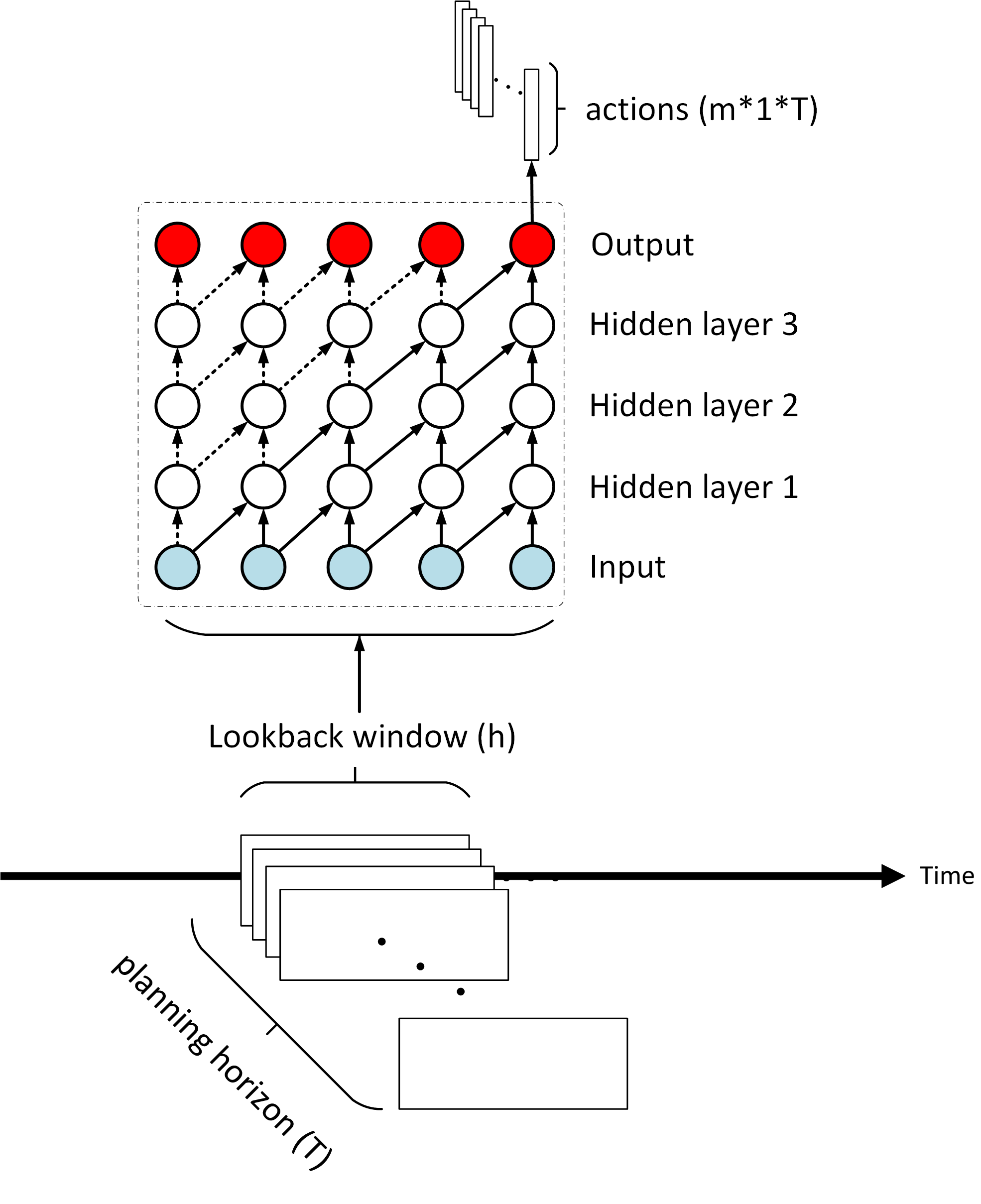}
		\caption{$\mu_{\theta}(s)$ applied to each state separately
		}
	\end{subfigure}
	\begin{subfigure}{\mynewwidth\textwidth}
		\includegraphics[scale=.15]{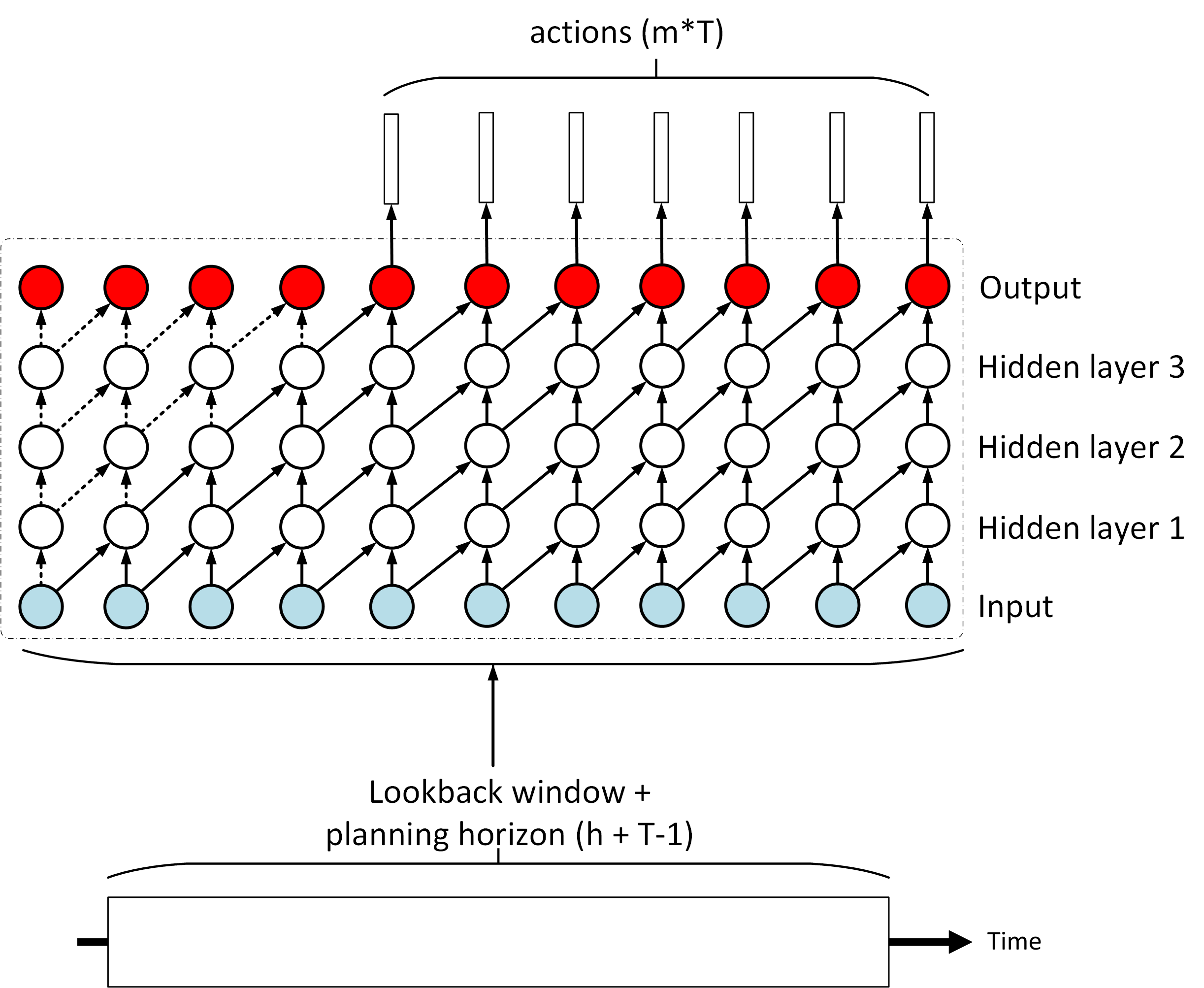}
		\vspace{1.1cm}
		\caption{$\vec{\mu}_{\theta}(S)$ applied to the full trajectory
		}
	\end{subfigure}
	
	\caption{Comparison between the use of policy network $\mu_{\theta}(s)$ and of the augmented policy network $\vec{\mu}_{\theta}(S)$ }
	\label{fig:fastVSslowConv}
\end{figure}

\newpage
\subsection{Hyper-parameters Selection}\label{sec:app:hyperParams}

\begin{table}[h]
\caption{List of Selected Hyper-parameters.}\label{table:hyperparams}
\label{tbl:hyperparams}
\centering
\begin{tabular}{lcccc}
\toprule
Hyper-parameter	&	Search range & WaveCorr & CS-PPN & EIIE	\\
\midrule
Learning rate	&	$\{5\times 10^{-5},\;10^{-4},\;  10^{-3},\;5\times10^{-3}\}$&$5\times 10^{-5}$&$5\times 10^{-5}$&$10^{-4}$	\\
Decay rate	&	$\{0.9999,\;0.99999,\;1\}$	&$0.99999$&$0.99999$&$1$\\
Minimum rate	&	$\{\; 10^{-6},\;10^{-5}\}$	&$10^{-5}$&$10^{-5}$&$10^{-5}$\\
Planning horizon $T$	&	$\{32,\; 64\}$	&$32$&$32$&$32$\\
Look back window size $h$	&	$\{32,\; 64\}$	&$32$&$32$&$32$\\
Number of epochs & $[0,\;\infty)$& 5000 & 5000 & 5000\\
\bottomrule
\end{tabular}
\end{table}

\newpage
\subsection{Additional results}\label{sec:app:addRes}

\subsubsection{Comparative Study}

\removed{
\begin{figure}[h!]
	\centering
	\begin{subfigure}{\mynewwidth\textwidth}
		\includegraphics[scale=\mynewscale]{graphs/Main/CA/ReturnsComparison.png}
		\caption{Can-data}
	\end{subfigure}
	\begin{subfigure}{\mynewwidth\textwidth}
		\includegraphics[scale=\mynewscale]{graphs/Main/US/ReturnsComparison.png}
		\caption{US-data}
	\end{subfigure}
	
 	\begin{subfigure}{\mynewwidth\textwidth}
 		\includegraphics[scale=\mynewscale]{graphs/Main/COVID/ReturnsComparison.png}
 		\caption{Covid-data}
 	\end{subfigure}
	
	\caption{Comparison of the wealth accumulated (a.k.a. portfolio value) by WaveCorr, CS-PPN, EIIE, and EW over Can-data, US-data, and Covid-data.
	}
	\label{fig:mainexperiment}
	
\end{figure}}

\begin{figure}[h!]
	\centering
	\begin{subfigure}{\mybigwidth\textwidth}
		\includegraphics[scale=\mybigscale]{graphs/Main/CA/ReturnsComparison.png}
		\caption{Can-data}
	\end{subfigure}
	
	\begin{subfigure}{\mybigwidth\textwidth}
		\includegraphics[scale=\mybigscale]{graphs/Main/US/ReturnsComparison.png}
		\caption{US-data}
	\end{subfigure}
	
 	\begin{subfigure}{\mybigwidth\textwidth}
 		\includegraphics[scale=\mybigscale]{graphs/Main/COVID/ReturnsComparison.png}
 		\caption{Covid-data}
 	\end{subfigure}
	
	\caption{Average (solid curve) and range (shaded region) of out-of-sample wealth accumulated by WaveCorr, CS-PPN, EIIE, and EW over 10 experiments using Can-data, US-data, and Covid-data.
	}
	\label{fig:mainexperiment}
\end{figure}

\removed{
\subsection{Sensitivity to Permutation of Assets}

\begin{figure}[h!]
	\centering
\removed{	\begin{subfigure}{\mynewwidth\textwidth}
		\includegraphics[scale=\mynewscale]{graphs/permutation/canada/wavenet0/In_sample.png}
		\caption{In sample learning curve, CS-PPN}
	\end{subfigure}
	\begin{subfigure}{\mynewwidth\textwidth}
		\includegraphics[scale=\mynewscale]{graphs/permutation/canada/wavenet1/In_sample.png}
		\caption{In sample learning curve, WaveCorr}
	\end{subfigure}
	
	\begin{subfigure}{\mynewwidth\textwidth}
		\includegraphics[scale=\mynewscale]{graphs/permutation/canada/wavenet0/Out_of_sample.png}
		\caption{Out of sample learning curve, CS-PPN}
	\end{subfigure}
	\begin{subfigure}{\mynewwidth\textwidth}
		\includegraphics[scale=\mynewscale]{graphs/permutation/canada/wavenet1/Out_of_sample.png}
		\caption{Out of sample learning curve, WaveCorr}
	\end{subfigure}}

	\begin{subfigure}{\mynewwidth\textwidth}
		\includegraphics[scale=\mynewscale]{graphs/permutation/canada/wavenet1/ReturnsComparison.png}
		\caption{WaveCorr}
	\end{subfigure}
	\begin{subfigure}{\mynewwidth\textwidth}
		\includegraphics[scale=\mynewscale]{graphs/permutation/canada/wavenet0/ReturnsComparison.png}
		\caption{CS-PPN}
	\end{subfigure}
	
	\caption{Comparison of the wealth accumulated (a.k.a. portfolio value) by WaveCorr and CS-PPN under random initial permutation of assets on Can-data's test set. All experiments use the same initial NN parametrization.
	}
	\label{fig:wavecorr_compare_costsens_ca_permutation}
	
\end{figure}
}

\newpage
\subsubsection{Sensitivity to number of assets}

\begin{figure}[h!]
	\centering
	
\removed{
	\begin{subfigure}{\mynewwidth\textwidth}
		\includegraphics[scale=\mynewscale]{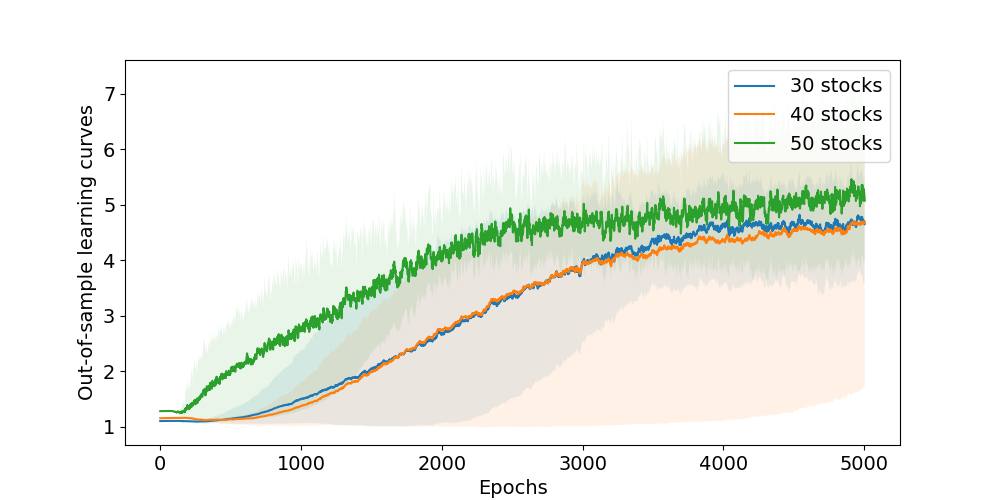}
		\caption{Out of sample learning curve, CS-PPN}
	\end{subfigure}
	\begin{subfigure}{\mynewwidth\textwidth}
		\includegraphics[scale=\mynewscale]{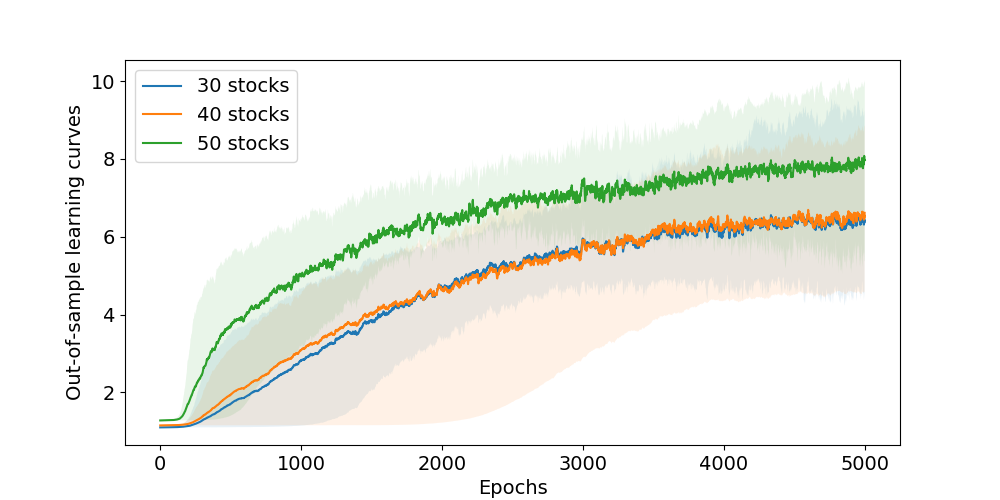}
		\caption{Out of sample learning curve, WaveCorr}
	\end{subfigure}}
	
	\begin{subfigure}{\mynewwidth\textwidth}
		\includegraphics[scale=\mynewscale]{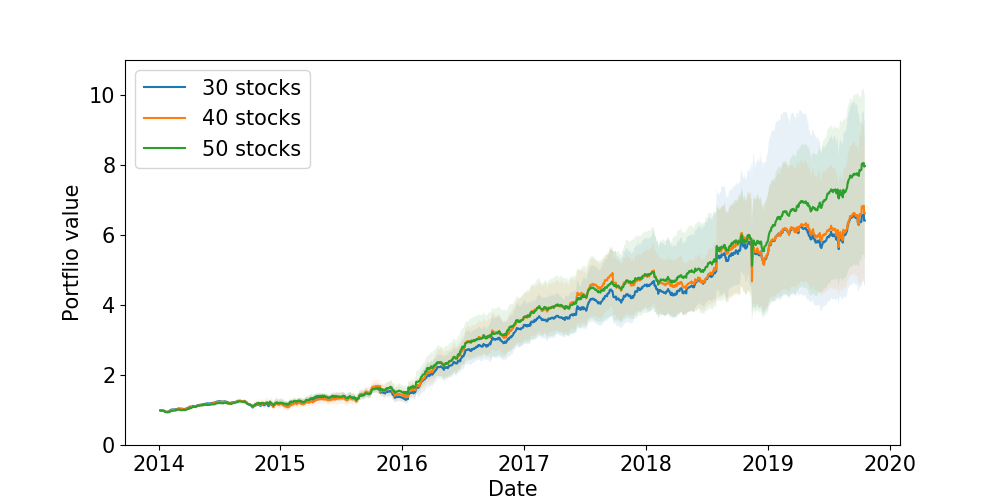}
		\caption{WaveCorr}
	\end{subfigure}
	\begin{subfigure}{\mynewwidth\textwidth}
		\includegraphics[scale=\mynewscale]{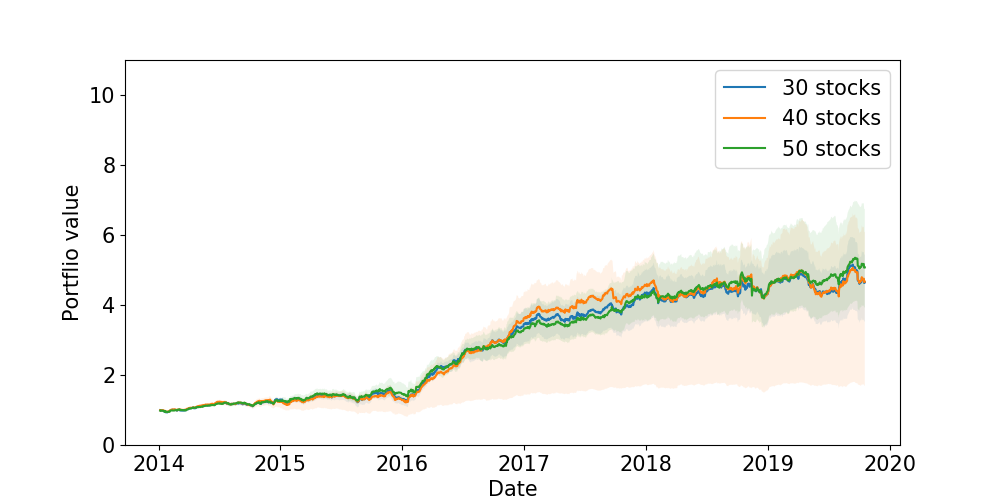}
		\caption{CS-PPN}
	\end{subfigure}

	\caption{Average (solid curve) and range (shaded region) of the out-of-sample wealth accumulated, on 10 experiments using Can-data, by WaveCorr and CS-PPN when increasing the number of assets.
	}
	\label{fig:wavecorr_compare_costsens_increase}
\end{figure}

\subsubsection{Performance comparison under maximum holding constraint}\label{sec:app:maxholding}

In practice, it is often required that the portfolio limits the amount of wealth invested in a single asset. This can be integrated to the risk-averse DRL formulation:
\begin{align*}
\bar{J}_F(\mu_\theta):=\mathbb{E}_{\substack{s_0\sim F\\s_{t+1}\sim P(\cdot|s_t,\mu_\theta(s_t))}}&[SR(r_0(s_0,\mu_\theta(s_0),s_1),...)]- \frac{M}{T} \sum_{t=0}^{T-1} \sum_{i=1}^{m}\max(0,\;\pw_t^i - w_{max})    
\end{align*}
where $w_{max}$ is the maximum weight allowed in any asset, and $M$ is a large constant. This new objective function penalizes any allocation that goes beyond $w_{max}$, which will encourage $\mu_\theta$ to respects the maximum weight allocation condition. The commission rates are considered to be $c_s=c_p=0.5\%$, and the experiments here are done over Can-data using the full set of 70 stocks, with a maximum holding of $20\%$. The results are summarized in Table \ref{tbl:real_environment} and illustrated in Figure \ref{fig:real_environment}. As noted before, we observe that WaveCorr outperforms CS-PPN with respect to all performance metrics.

\begin{figure}[h!]
	\centering
	\removed{\begin{subfigure}{\mynewwidth\textwidth}
		\includegraphics[scale=\mynewscale]{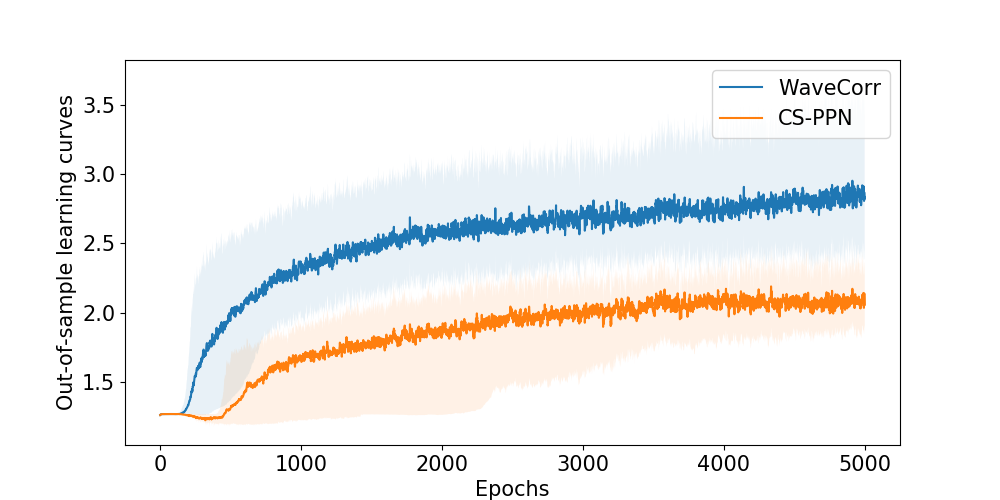}
		\caption{Out of sample learning curve}
	\end{subfigure}}
	\begin{subfigure}{\mynewwidth\textwidth}
		\includegraphics[scale=\mynewscale]{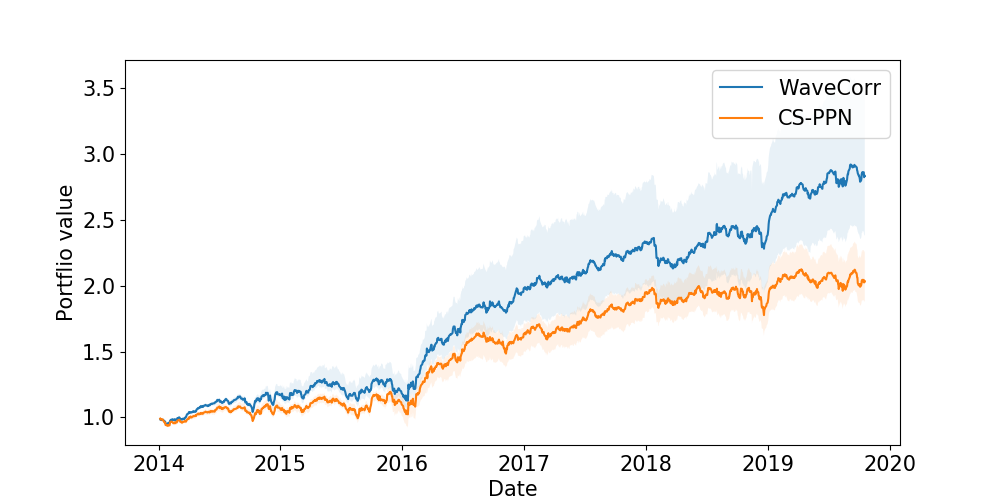}
		\caption{Out-of-sample cumulative returns}
	\end{subfigure}
	
	
	\caption{Average (solid curve) and range (shaded region) of the out-of-sample wealth accumulated, on 10 experiments using Can-data, by WaveCorr and CS-PPN under maximum holding constraint.
} 
	\label{fig:real_environment}
\end{figure}

\begin{table}[h]
\caption{The average (and standard dev.) performances when imposing a maximum holding constraints over 10 random initial NN weights in Can-data.}
\label{tbl:real_environment}
\centering
\begin{tabular}{l|cccccc}
\toprule
&	Annual return   	&	Annual vol      	&	SR     	&	MDD    	&	Daily hit rate  	&	Turnover        	\\
\midrule
WaveCorr	&	20\% (2\%)	&	13\% (0\%)	&	1.55 (0.18)	&	14\% (1\%)	&	53\% (1\%)	&	0.17 (0.01)	\\

CS-PPN	&	13\% (1\%)	&	13\% (1\%)	&	1.00 (0.15)	&	15\% (2\%)	&	50\% (1\%)	&	0.22 (0.03)	\\

\bottomrule

\end{tabular}
\end{table}


\bibliographystyle{unsrtnat}
\bibliography{References}

\end{document}